\documentclass[a4paper,onecolumn,unpublished]{quantumarticle}
\pdfoutput=1
\PassOptionsToPackage{hyphens}{url}
\usepackage[dvipsnames]{xcolor}
\usepackage{graphicx}
\usepackage[ colorlinks = true, 
             linkcolor = violet,
             urlcolor  = ForestGreen,
             citecolor = ForestGreen,
             anchorcolor = Violet,
]{hyperref} 
\usepackage{amsmath,amssymb,amsthm,mathtools,mathrsfs}
\usepackage{braket}
\usepackage{pifont}
\usepackage{authblk} 
\usepackage{bbm}
\usepackage{tikz}
\usepackage[compat=1.1.0]{tikz-feynman}
\usepackage{dsfont}
\usepackage{enumerate}
\usepackage{tocbasic}
\usepackage{physics}
\usepackage[all]{xy}
\usepackage{cite}

\DeclareTOCStyleEntry[
  beforeskip=0.3em plus 1pt,
  pagenumberformat=\textbf
]{tocline}{section}

\makeatletter
\renewcommand*\env@matrix[1][\arraystretch]{%
  \edef\arraystretch{#1}%
  \hskip -\arraycolsep
  \let\@ifnextchar\new@ifnextchar
  \array{*\c@MaxMatrixCols c}}
\makeatother

\theoremstyle{plain}
\newtheorem{theorem}[equation]{Theorem}
\newtheorem{lemma}[equation]{Lemma}
\newtheorem{proposition}[equation]{Proposition}
\newtheorem{corollary}[equation]{Corollary}
\theoremstyle{definition}
\newtheorem{definition}[equation]{Definition}

\newtheorem{question}[equation]{Question}

\newtheorem{problem}[equation]{Problem}
\newtheorem{example}[equation]{Example}
\newtheorem{exercise}[equation]{Exercise}
\newtheorem*{answer}{Answer}
\newtheorem*{solution}{Solution}
\newtheorem{remark}[equation]{Remark}
\newtheorem{terminology}[equation]{Terminology}

\newtheorem{notation}[equation]{Notation}
\newtheorem{noterm}[equation]{Notation and Terminology}

\numberwithin{equation}{section}

\newcommand{\be}{\begin{equation}}
\newcommand{\ee}{\end{equation}}
\newcommand{\bx}{\begin{example}}
\newcommand{\ex}{\end{example}}
\newcommand{\bex}{\begin{exercise}}
\newcommand{\eex}{\end{exercise}}
\newcommand{\ban}{\begin{answer}}
\newcommand{\ean}{\end{answer}}
\newcommand{\bt}{\begin{theorem}}
\newcommand{\et}{\end{theorem}}
\newcommand{\bc}{\begin{corollary}}
\newcommand{\ec}{\end{corollary}}
\newcommand{\blem}{\begin{lemma}}
\newcommand{\elem}{\end{lemma}}
\newcommand{\bp}{\begin{problem}}
\newcommand{\ep}{\end{problem}}
\newcommand{\bn}{\begin{proposition}}
\newcommand{\en}{\end{proposition}}
\newcommand{\bd}{\begin{definition}}
\newcommand{\ed}{\end{definition}}
\newcommand{\bq}{\begin{question}}
\newcommand{\eq}{\end{question}}
\newcommand{\bprf}{\begin{proof}}
\newcommand{\eprf}{\end{proof}}
\newcommand{\br}{\begin{remark}}
\newcommand{\er}{\end{remark}}
\newcommand{\bs}{\begin{solution}}
\newcommand{\es}{\end{solution}}
\newcommand{\bnt}{\begin{noterm}}
\newcommand{\ent}{\end{noterm}}
\newcommand{\bnot}{\begin{notation}}
\newcommand{\enot}{\end{notation}}
\newcommand{\bterm}{\begin{terminology}}
\newcommand{\eterm}{\end{terminology}}

\newcommand{\id}{\mathrm{id}}

\def\R{{{\mathbb R}}}
\def\C{{{\mathbb C}}}

\def\Q{{{\mathbb Q}}}

\DeclareMathAlphabet{\mathpzc}{OT1}{pzc}{m}{it}
 \DeclareFontFamily{OT1}{pzc}{}
 \DeclareFontShape{OT1}{pzc}{m}{it}{ <-> s*[1.2] pzcmi7t }{}
 \DeclareMathAlphabet{\mathpzc}{OT1}{pzc}{m}{it}

\def\A{\mathcal{A}}

\def\E{\mathcal{E}}
\def\H{\mathcal{H}}
\def\F{\mathcal{F}}

\def\J{\mathscr{J}}
\def\C{\mathbb{C}}
\def\Q{\mathcal{Q}}
\def\R{\mathbb{R}}

\title{On the entropy of a pseudo-density matrix}
\author{James Fullwood}
\email{fullwood@hainanu.edu.cn}
\affiliation{School of Mathematics and Statistics, Hainan University, Haikou, Hainan, 570228, China}
\affiliation{Hainan International Exchange Center for Theoretical Physics, Haikou, Hainan, 570228, China}
\author{Boyu Yang}
\affiliation{School of Mathematics and Statistics, Hainan University, Haikou, Hainan, 570228, China}
\newcommand{\Addresses}{{
}}

\begin{document}
\emergencystretch 2em

\maketitle

\vspace{-7mm}
\tableofcontents

\begin{abstract}
The pseudo-density matrix (PDM) formalism naturally extends the notion of density operator to the spatiotemporal domain. While PDMs are Hermitian and of unit trace, they admit negative eigenvalues when encoding temporal correlations unattainable for spacelike separated systems. Consequently, there have been various approaches to extending the von~Neumann entropy—which is only defined for positive operators—to PDMs. Here, we prove that there exists a unique extension of the von~Neumann entropy functional to Hermitian matrices of unit trace satisfying two simple assumptions: unitary invariance and strong additivity with respect to affine combinations within the interval $[-1,1]$, which we prove contains the eigenvalues of a PDM. We also prove that this unique extension of von~Neumann entropy to PDMs is subadditive for single qubit dynamics, and we analyze its behavior in a number of examples.
\end{abstract}

\section{Introduction}

The axiomatic approach to the formulation of information measures goes back to the seminal work of Shannon~\cite{Shannon}, who proved that there is a unique functional $H$ on finite probability distributions satisfying three fundamental properties: Continuity, monotonicity with respect to the number of inputs for uniform distributions, and strong additivity with respect to convex combinations. For a finite probability vector $(p_1,\ldots,p_n)$, the functional $H$ is given by
\[
H(p_1,\ldots,p_n)=-\sum_{i=1}^{n}p_i\log(p_i)\, ,
\] 
and is now referred to as the \emph{Shannon entropy} of $(p_1,\ldots,p_n)$. Interestingly, while Shannon established this rigorous mathematical foundation for classical information theory, the analogous quantum measure—the von Neumann entropy—predated it by over two decades. Introduced by von Neumann in 1927~\cite{vonNeumann_1927}, the quantum entropy $S(\rho)=-\tr(\rho\log(\rho))$ of a density operator $\rho$ was originally derived not from abstract information-theoretic axioms, but through a thermodynamic thought experiment involving the reversible separation of a quantum gas using macroscopic semi-permeable membranes. It was not until much later that mathematical physicists, notably Ochs and Wehrl in the 1970s~\cite{Ochs_1975,Wehrl_1978}, established strict axiomatic characterizations of the von~Neumann entropy. The axiomatic approach to the study of information measures continues to evolve into the present day~\cite{Aczel,Faddeev,Furuichi,Renyi,Pe92,Ebanks,Baez_2011,BaFr14,leinster2019short,FuPa21,Fullwood_AXM,PaEntropy}, including the functorial characterization of Shannon entropy by Baez, Fritz, and Leinster in terms of information loss~\cite{Baez_2011}, a framework which Parzygnat recently extended to the quantum domain to provide a functorial characterization of von~Neumann entropy~\cite{PaEntropy}.

In this work, we apply the axiomatic approach to the extension of von~Neumann entropy to \emph{pseudo-density matrices} (PDMs), which provide a natural extension of the density operator formalism into the spatiotemporal domain. In particular, PDMs encode temporal correlations associated with sequential measurements on a quantum system in precisely the same manner that a density matrix encodes correlations between spacelike separated systems. Moreover, PDMs also encode the \emph{dynamics} between sequential measurements, whether it be unitary evolution or dissipative dynamics corresponding to system-environment interactions. While PDMs are Hermitian and of unit trace, they admit negative eigenvalues when they encode temporal correlations unattainable by spacelike separated measurements. Since their introduction by Fitzsimons, Jones and Vedral in 2015~\cite{FJV15}, PDMs have provided insights into various spatiotemporal aspects of quantum information, including channel capacity~\cite{Pisar19}, teleportation in time~\cite{MVVAPGDG21}, the black hole information problem~\cite{Marletto_2020}, quantum Bayes' rules~\cite{FuPa22a,FuPa25}, quantum causal inference~\cite{Liu_2024,Liu_2025}, and testing macroscopic realism~\cite{FullwoodWu_2025,Comar_2026}. The PDM formalism is also supported by a firm axiomatic foundation~\cite{LiNg23,LieFu_2025}, and it has recently been shown that PDMs are equivalent to the notion of canonical quantum state over time~\cite{HHPBS17,FuPa24}.

While the negative eigenvalues of a PDM serve as a witness to quantum correlations which imply causation~\cite{FJV15,LiuV_2025}, a non-positive spectrum renders the von~Neumann entropy ill-defined, as the logarithm is only defined for positive operators. Moreover, as even non-Hermitian matrices naturally arise in the context of quantum states over time~\cite{Lie_2026}, spacetime states~\cite{Guo_2025,Milekhin2025,Diaz_2026}, local-density operators~\cite{FullwoodYang_25}, non-Hermitian Hamiltonians~\cite{Ashida_2020}, post-selection~\cite{Aharonov_1988,AHLLBL20}, and the two-vector formalism~\cite{AhVa08}, the need to evaluate entropy for non-positive or complex spectra has become increasingly relevant. Consequently, there have been various proposals for extending von~Neumann entropy to PDMs and non-Hermitian matrices. The recent work Ref.~\cite{Chen_2026} compares four distinct extensions: pseudo-entropy~\cite{Nakata_2020}, modified pseudo-entropy~\cite{JSK23}, SVD entropy~\cite{PaTK23}, and ABB entropy~\cite{ABB_2000}. Although pseudo-entropy accommodates post-selection, it frequently yields complex values or severe divergences. SVD entropy circumvents this by utilizing singular values to guarantee a real, bounded measure. Meanwhile, ABB entropy admits an explicit operational interpretation as an entanglement distillation monotone.

Distinct from these approaches, an extension of von~Neumann entropy to normal operators---referred to as `FP entropy' in Ref.~\cite{PaTK23}---has been utilized to evaluate negative conditional entropy in post-selected states~\cite{SSW14}, to investigate negative central charges in non-Hermitian systems~\cite{TTC22}, and to compute the Page curve for non-Hermitian systems whose eigenstates exhibit many-body quantum chaos~\cite{CiKu23}. Furthermore, such a notion of entropy has been shown to provide a natural notion of entropy for pseudo-density matrices~\cite{FuPa23}, and is the foundational measure used to define quantum mutual information in time in Ref.~\cite{FullwoodWu_2025a}. However, despite its utility across diverse contexts, this extension of von~Neumann entropy has thus far lacked a rigorous mathematical justification, a gap we resolve in this work by providing a strict axiomatic characterization. In particular, we prove that such an extension of von~Neumann entropy to unit trace Hermitian matrices is singled out by two simple assumptions: unitary invariance and strong additivity with respect to affine combinations within the interval $[-1,1]$. The relaxation of strong additivity with respect to convex combinations to allow for affine combinations within the interval $[-1,1]$ is motivated by our proof that the eigenvalues of a pseudo-density matrix never exceed $1$ in absolute value (Theorem~\ref{PDMSPXCTRA27}). We also prove that the entropy of a pseudo-density matrix is subadditive for single qubit dynamics, which resolves a special case of a conjecture put forth in Ref.~\cite{FuPa23}. Finally, we consider some fundamental examples of open system dynamics---such as depolarizing, amplitude damping, bit-flip and phase-flip channels---and we establish bounds on the entropy of the associated pseudo-density matrices.   

\section{Pseudo-density matrices and their spectra}

While multipartite density matrices encode correlations between measurements performed on spacelike separated quantum systems, multipartite pseudo-density matrices encode correlations between sequential measurements performed on the same quantum system. To motivate the definition of a pseudo-density matrix, let $A$ and $B$ denote quantum systems each consisting of a single qubit, so that we may identify the associated Hilbert spaces $\H_A$ and $\H_B$ with $\C^2$. If $A$ and $B$ are spacelike separated, then the joint state of the composite system $AB$ is represented by a density matrix $\rho_{AB}$ acting on $\C^2\otimes \C^2$, which is a positive operator of unit trace. Expanding $\rho_{AB}$ with respect to the Pauli basis yields
\be \label{2QBTDX87}
\rho_{AB}=\frac{1}{4}\sum_{i=0}^{3}c_{ij}\sigma_i\otimes \sigma_j\, ,
\ee
which by properties of Pauli matrices implies 
\[
c_{ij}=\tr(\rho_{AB}\,\sigma_i\otimes \sigma_j)\equiv \langle \sigma_i\otimes \sigma_j \rangle\, ,
\]
where $\langle \sigma_i\otimes \sigma_j \rangle$ is the expectation value of the product of local measurements of $\sigma_i$ and $\sigma_j$ performed on systems $A$ and $B$, respectively. All of this is standard. 

Now suppose that $A$ and $B$ are instead timelike separated, and correspond to the \emph{same} qubit at two times $t_A$ and $t_B>t_A$. Furthermore, suppose that this single qubit was prepared in state $\rho_A$ prior to a sequential measurement of $\sigma_i$ followed by $\sigma_j$ at times $t_A$ and $t_B$ respectively. If the qubit evolves according to a quantum channel $\E$ between measurements, then it follows from the L\"{u}ders-von~Neumann projection postulate that the 2-time expectation value $\langle \sigma_i\, ,\sigma_j \rangle$ associated with such a sequential measurement scenario is given by 
\[
\langle \sigma_i\, ,\sigma_j \rangle=\tr(\E(\Pi_i^+\rho_A \Pi_i^+)\sigma_j)-\tr(\E(\Pi_i^-\rho_A \Pi_i^-)\sigma_j)\, ,
\]
where $\Pi_i^{\pm}$ and $\Pi_j^{\pm}$ are the projection operators such that $\sigma_i=\Pi_i^+-\Pi_i^-$ and $\sigma_j=\Pi_j^+-\Pi_j^-$. Replacing $c_{ij}$ in the right-hand side (RHS) of Eq.~\eqref{2QBTDX87} with the 2-time expectation values $\langle \sigma_i\, ,\sigma_j \rangle$ then yields the pseudo-density matrix $\varrho_{AB}$ associated with the timelike separated qubits $A$ and $B$, so that
\be \label{PDM71}
\varrho_{AB}=\frac{1}{4}\sum_{i=0}^{3}\langle \sigma_i\, ,\sigma_j \rangle\sigma_i\otimes \sigma_j \,.
\ee
While the pseudo-density matrix $\varrho_{AB}$ is Hermitian and of unit trace, the expectation values $\langle \sigma_i\, ,\sigma_j \rangle$ may endow $\varrho_{AB}$ with negative eigenvalues, so that $\varrho_{AB}$ is not a valid density matrix in general. Nevertheless, $\rho_A=\tr_B(\varrho_{AB})$ and $\rho_B=\tr_A(\varrho_{AB})$ are both valid density operators, and it follows from properties of the Pauli matrices that for all $i$ and $j$ we have
\[
\langle \sigma_i\, ,\sigma_j \rangle=\tr(\varrho_{AB}\, \sigma_i\otimes \sigma_j)\, .
\]
It then follows that the pseudo-density matrix $\varrho_{AB}$ encodes the 2-time expectation values $\langle \sigma_i\, ,\sigma_j \rangle$ in precisely the same manner that the density matrix $\rho_{AB}$ encodes the spatial expectation values $\langle \sigma_i\otimes \sigma_j \rangle$. As such, the pseudo-density matrix $\varrho_{AB}$ yields a natural notion of quantum state for timelike separated qubits. For a concrete example, if $\rho=\dyad{0}{0}$ and $\E$ is the identity channel, then 
\be \label{BCSTX97}
\varrho_{AB}=
\begin{pmatrix}
1&0&0&0\\
0&0&\frac{1}{2}&0\\
0&\frac{1}{2}&0&0\\
0&0&0&0\\
\end{pmatrix}\
\, ,
\ee
which has eigenvalues $(1,0,1/2,-1/2)$. The non-positivity of $\varrho_{AB}$ then implies that the correlations encoded by $\varrho_{AB}$ may not be realized by a pair of spacelike separated qubits. 

In Ref.~\cite{HHPBS17}, it was shown that the 2-time pseudo-density matrix $\varrho_{AB}$ as given by \eqref{PDM71} may be given by the compact formula
\be \label{GBLXFXM627}
\varrho_{AB}=\frac{1}{2}\big\{\rho_A\otimes \mathds{1}_B\, , \mathscr{J}[\E]\big\}\, ,
\ee 
where $\{\cdot\, ,\cdot\}$ denotes the anticommutator, and $\J[\E]=\sum_{i,j}\dyad{i}{j}\otimes \E(\dyad{j}{i})$ is the \emph{Jamio\l kowski} matrix\footnote{The Jamio\l kowski matrix $\J[\E]$ is not to be confused with the Choi matrix of the channel $\E$, which is obtained from $\J[\E]$ by partial transposition.} of the quantum channel $\E$, which we recall is a completely positive, trace-preserving (CPTP) linear map governing the dynamics between measurements in the associated 2-point sequential measurement scenario. As the RHS of Eq.~\eqref{GBLXFXM627} is well-defined for any two (finite-dimensional) quantum systems $A$ and $B$ corresponding to the input and output of a quantum channel $\E$, we take the RHS of Eq.~\eqref{GBLXFXM627} as the general definition of a pseudo-density matrix associated with any state-channel pair $(\rho_A,\E)$. Moreover, it was shown in Ref.~\cite{FuPa24} that the pseudo-density matrix $\varrho_{AB}$ as given by  Eq.~\eqref{GBLXFXM627} retains the operational interpretation of the 2-qubit pseudo-density matrix in terms of encoding 2-time expectation values.  

The 2-time pseudo-density matrix $\varrho_{AB}$ associated with the state-channel pair $(\rho_A,\E)$ as given by Eq.~\eqref{GBLXFXM627} admits a natural extension to $n$-point sequential measurement scenarios~\cite{Fullwood_2025a,Liu_2025}. In particular, suppose a quantum system initially in state $\rho$ is measured at $n$ times $t_1<t_2<\cdots<t_n$, and suppose that the system evolves according to a quantum channel $\E_i$ between the measurements performed at times $t_i$ and $t_{i+1}$ for $i=1,\ldots,n-1$. We then define the associated $n$-time pseudo-density matrix $\varrho_{A_1\cdots A_n}$ by the recursive formula
\be \label{NTXPDM679}
\varrho_{A_1\cdots A_n}=\frac{1}{2}\big\{\varrho_{A_1\cdots A_{n-1}}\otimes \mathds{1}_{A_n},\J[\E_{n-1}\circ \tr_{1\cdots (n-2)}] \big\}\, ,
\ee
where $\tr_{1\cdots (n-2)}$ denotes the partial trace over the subsystems $A_1\cdots A_{n-2}$.

As we saw with the simple example of a 2-time pseudo-density matrix given by Eq.~\eqref{BCSTX97}, all of its eigenvalues were contained in the interval $[-1,1]$. The following result establishes that the same bound holds for the eigenvalues of an $n$-time pseudo-density matrix as given by Eq.~\eqref{NTXPDM679}. Since pseudo-density matrices are Hermitian and of unit trace, this implies that the spectral decomposition of an $n$-time pseudo-density matrix $\varrho_{A_1\cdots A_n}$ takes the form of an affine combination
\[
\varrho_{A_1\cdots A_n}=\sum_i \lambda_i \dyad{\psi_i}{\psi_i}\, ,
\]
where $\{\lambda_i\}\subset [-1,1]$ is the multi-set of eigenvalues of $\varrho_{A_1\cdots A_n}$ (so that $\sum_i \lambda_i = 1$), and $\{\ket{\psi_i}\}$ is an orthonormal basis of $\H_{A_1}\otimes \cdots \otimes \H_{A_n}$. This structural property will be crucial for our axiomatic derivation of the entropy of a pseudo-density matrix in the next section.

\bt \label{PDMSPXCTRA27}
The spectrum of the pseudo-density matrix $\varrho_{A_1\cdots A_n}$ as given by \eqref{NTXPDM679} is contained in the interval $[-1,1]$.
\et 

Before giving a proof we first prove two lemata. For this, recall that a quantum channel from system $A$ to $B$ consists of a CPTP map $\E:\mathcal{L}(\H_A)\to \mathcal{L}(\H_B)$, where $\mathcal{L}(\H)$ denotes the algebra of linear operators on a Hilbert space $\H$.

\blem \label{lem:jamiolkowski-norm}
Let $A$ and $B$ denote finite-dimensional quantum systems, and let $\E: \mathcal{L}(\H_A) \to \mathcal{L}(\H_B)$ be a quantum channel. Then the Jamio\l kowski matrix $\J[\E]=\sum_{i,j}\dyad{i}{j}\otimes\E(\dyad{j}{i})$ satisfies $\norm{\J[\E]}_{\infty} \le 1$.
\elem

\bprf
Suppose the Kraus representation of $\E$ is given by $\E(\rho) = \sum_{\alpha} K_{\alpha} \rho K_{\alpha}^\dagger$, where the trace-preserving condition mandates $\sum_{\alpha} K_{\alpha}^\dagger K_{\alpha} = \mathds{1}_A$. Using the swap operator $S_A = \sum_{i,j} \dyad{i}{j} \otimes \dyad{j}{i}$ on $\H_A \otimes \H_A$, we can rewrite the Jamio\l kowski matrix as
\[
\J[\E] = \sum_{\alpha} (\mathds{1}_A \otimes K_{\alpha}) S_A (\mathds{1}_A \otimes K_{\alpha}^\dagger)\, .
\]
Now consider an arbitrary normalized vector $\ket{\psi} \in \H_A \otimes \H_B$. We can always express $\ket{\psi}$ as $\ket{\psi} = (V \otimes \mathds{1}_B) \ket{\Phi^+_B}$, where $\ket{\Phi^+_B} = \sum_{k=1}^{d_B} \ket{k}_B \ket{k}_B$ is the unnormalized maximally entangled state on $\H_B \otimes \H_B$, and $V: \H_B \to \H_A$ is a linear operator satisfying $\tr(V^\dagger V) = \braket{\psi}{\psi} = 1$. Evaluating the expectation value of $\J[\E]$ with respect to $\ket{\psi}$ yields
\begin{align*}
\bra{\psi} \J[\E] \ket{\psi} &= \sum_{\alpha} \bra{\Phi^+_B} (V^\dagger \otimes \mathds{1}_B) (\mathds{1}_A \otimes K_{\alpha}) S_A (\mathds{1}_A \otimes K_{\alpha}^\dagger) (V \otimes \mathds{1}_B) \ket{\Phi^+_B} \\
&= \sum_{\alpha} \bra{\Phi^+_B} (V^\dagger \otimes K_{\alpha}) S_A (V \otimes K_{\alpha}^\dagger) \ket{\Phi^+_B}\, .
\end{align*}
Now since 
\[
S_A (V \otimes K_{\alpha}^\dagger) \ket{\Phi^+_B} = (K_{\alpha}^\dagger \otimes V) \ket{\Phi^+_B}\, ,
\]
the inner product simplifies to
\[
\bra{\psi} \J[\E] \ket{\psi} = \sum_{\alpha} \bra{\Phi^+_B} (V^\dagger K_{\alpha}^\dagger \otimes K_{\alpha} V) \ket{\Phi^+_B}\, .
\]
Moreover, note that for any operators $X$ and  $Y$ acting on $\H_B$ we have the standard identity 
\[
\bra{\Phi^+_B} (X \otimes Y) \ket{\Phi^+_B} = \sum_{k,l} \bra{k} X \ket{l} \bra{k} Y \ket{l}\, ,
\]
thus 
\[
\bra{\Phi^+_B} (B_{\alpha}^\dagger \otimes B_{\alpha}) \ket{\Phi^+_B} = \sum_{k,l} \bra{k} B_\alpha^\dagger \ket{l} \bra{k} B_\alpha \ket{l} = \sum_{k,l} \overline{\bra{l} B_\alpha \ket{k}} \bra{k} B_\alpha \ket{l}\, ,
\]
where $B_{\alpha} = K_{\alpha} V$. Applying the Cauchy-Schwarz inequality to the sum over matrix elements then yields
\[
\left| \sum_{k,l} \overline{\bra{l} B_{\alpha} \ket{k}} \bra{k} B_{\alpha} \ket{l} \right| \le \sqrt{ \sum_{k,l} |\bra{l} B_{\alpha} \ket{k}|^2 } \sqrt{ \sum_{k,l} |\bra{k} B_{\alpha} \ket{l}|^2 } = \sum_{k,l} |\bra{k} B_{\alpha} \ket{l}|^2 = \tr(B_{\alpha}^\dagger B_{\alpha})\, .
\]
Taking the absolute value of the total expectation value and summing over $\alpha$ we obtain
\[
|\bra{\psi} \J[\E] \ket{\psi}| \le \sum_{\alpha} \tr(B_{\alpha}^\dagger B_{\alpha}) = \sum_{\alpha} \tr(V^\dagger K_{\alpha}^\dagger K_{\alpha} V) = \tr\left( V^\dagger \left( \sum_{\alpha} K_{\alpha}^\dagger K_{\alpha} \right) V \right)\, .
\]
Furthermore, since $\sum_{\alpha} K_{\alpha}^\dagger K_{\alpha} = \mathds{1}_A$, the above sum collapses to
\[
|\bra{\psi} \J[\E] \ket{\psi}| \le \tr(V^\dagger \mathds{1}_A V) = \tr(V^\dagger V) = 1\, .
\]
Since $\J[\E]$ is Hermitian,
\[
\norm{\J[\E]}_\infty=\sup_{\norm{\psi}=1}
\left|\bra{\psi}\J[\E]\ket{\psi}\right|\, ,
\]
thus preceding bound implies $\norm{\J[\E]}_\infty\le1$, as desired.
\eprf

\blem \label{lem:anticommutator-norm}
Let $A$ and $B$ denote finite-dimensional quantum systems, and suppose $X \in \mathcal{L}(\H_A)$ and $Y \in \mathcal{L}(\H_A \otimes \H_B)$ are such that $\norm{X}_{\infty} \le 1$ and $\norm{Y}_{\infty} \le 1$. Then
\[
\norm{\frac{1}{2}\{X \otimes \mathds{1}_B, Y\}}_{\infty} \le 1\, .
\]
\elem

\bprf
Let $W = X \otimes \mathds{1}_B$. Because the operator norm of a tensor product of operators is the product of their respective operator norms, we have $\norm{W}_{\infty} = \norm{X}_{\infty} \norm{\mathds{1}_B}_{\infty} = \norm{X}_{\infty} \le 1$. By the triangle inequality and the submultiplicativity of the operator norm, we have
\begin{align*}
\norm{\frac{1}{2}\{W, Y\}}_{\infty} &= \frac{1}{2}\norm{WY + YW}_{\infty} \le \frac{1}{2} \big( \norm{WY}_{\infty} + \norm{YW}_{\infty} \big) \\
&\le \frac{1}{2} \big( \norm{W}_{\infty}\norm{Y}_{\infty} + \norm{Y}_{\infty}\norm{W}_{\infty} \big)= \norm{W}_{\infty}\norm{Y}_{\infty}\, .
\end{align*}
Since $\norm{W}_{\infty} \le 1$ and $\norm{Y}_{\infty} \le 1$, it immediately follows that $\norm{\frac{1}{2}\{X \otimes \mathds{1}_B, Y\}}_{\infty} \le 1$, as  desired.
\eprf

\bprf[Proof of Theorem~\ref{PDMSPXCTRA27}]
We proceed by induction on $n$. For the base case $n=2$, Eq.~\eqref{GBLXFXM627} defines the 2-time pseudo-density matrix as $\varrho_{AB}=\frac{1}{2}\{\rho_A\otimes\mathds{1}_B,\J[\E]\}$. 
Because $\rho_A$ is a valid density operator, its spectrum lies in $[0,1]$, ensuring $\norm{\rho_A}_{\infty} \le 1$. By Lemma~\ref{lem:jamiolkowski-norm}, the Jamio\l kowski matrix satisfies $\norm{\J[\E]}_{\infty} \le 1$. 
Applying Lemma~\ref{lem:anticommutator-norm} with $X = \rho_A$ and $Y = \J[\E]$, we immediately obtain $\norm{\varrho_{AB}}_{\infty} \le 1$. 
Since $\varrho_{AB}$ is a Hermitian operator, its eigenvalues are bounded in absolute value by its operator norm, forcing the spectrum of $\varrho_{AB}$ to lie strictly within the interval $[-1, 1]$. For the inductive step, assume the claim holds for $n-1$, so that the spectrum of $\varrho_{A_1\cdots A_{n-1}}$ is contained in $[-1, 1]$. As $\varrho_{A_1\cdots A_{n-1}}$ is Hermitian, this implies $\norm{\varrho_{A_1\cdots A_{n-1}}}_{\infty} \le 1$. Now the recursive formula for the $n$-time PDM is given in Eq.~\eqref{NTXPDM679} as $\varrho_{A_1\cdots A_n}=\frac{1}{2}\{\varrho_{A_1\cdots A_{n-1}}\otimes \mathds{1}_{A_n},\J[\Phi]\}$, where $\Phi = \E_{n-1} \circ \tr_{1\cdots(n-2)}$. As the partial trace $\tr_{1\cdots(n-2)}$ and $\E_{n-1}$ are both CPTP, their composition $\Phi$ is also a CPTP map, hence $\norm{\J[\Phi]}_{\infty} \le 1$ by Lemma~\ref{lem:jamiolkowski-norm}. Applying Lemma~\ref{lem:anticommutator-norm} with $X = \varrho_{A_1\cdots A_{n-1}}$ and $Y = \J[\Phi]$, we find that $\norm{\varrho_{A_1\cdots A_n}}_{\infty} \le 1$. As the $n$-time PDM $\varrho_{A_1\cdots A_n}$ is Hermitian, its spectrum must be contained in the interval $[-1, 1]$, as desired.
\eprf

\section{Axiomatic derivation of PDM entropy}

In this section, we prove that two natural properties isolate a unique extension of von~Neumann entropy to PDMs: unitary invariance and strong additivity with respect to affine combinations within the interval $[-1,1]$. As von~Neumann entropy satisfies strong additivity for \emph{convex} combinations, our axiomatic derivation of entropy for pseudo-density matrices merely relaxes this condition to accommodate affine combinations. Such a relaxation is certainly natural in light of Theorem~\ref{PDMSPXCTRA27}, which establishes that the spectrum of a pseudo-density matrix is contained in $[-1,1]$. To fix notation for the statement of our result, we let $\mathbb{H}_{n}^{1}$ denote the set of all $n\times n$ Hermitian matrices of unit trace, and we let $\mathbb{H}^1=\bigsqcup_{n=1}^\infty \mathbb{H}_n^{1}$. For a Hermitian matrix $X$, the expression $X\log|X|$ is understood via the functional calculus for the function $g:\R\to\R$ given by $g(t)=t\log|t|$ for $t\neq0$ and $g(0)=0$.

\bt \label{FPXCXH747}
Suppose $S:\mathbb{H}^{1}\to \R$ is a map on the set of all unit trace Hermitian matrices satisfying the following conditions. 
\begin{enumerate}
\item \label{FPXCXH7471}
\textbf{Reduction to von~Neumann entropy.} $S(\rho)=-\tr\big(\rho\log(\rho)\big)$ for every positive element $\rho\in \mathbb{H}^1$.
\item \label{FPXCXH7472}
\textbf{Unitary invariance.} $S(UXU^\dagger)=S(X)$ for every $X\in \mathbb{H}^1$ and for every unitary $U$ which is the same size as $X$.
\item \label{FPXCXH7473}
\textbf{Strong additivity.}
If $\{X_i\}_{i=1}^{k}$ is such that $X_i\in\mathbb{H}_{n_i}^1$, and if $(p_1,\ldots,p_k)\in [-1,1]^k$ is such that $\sum_i p_i=1$, then
\be \label{SADXY347}
S\left(\bigoplus_{i=1}^{k}p_iX_i\right)= S\big(\operatorname{diag}(p_1,\ldots , p_k)\big)+\sum_{i=1}^k p_iS(X_i)\, ,
\ee
where $\bigoplus_{i=1}^k p_iX_i$ is the block-diagonal matrix with blocks $p_iX_i$, and $\operatorname{diag}(p_1,\ldots , p_k)$ is the diagonal matrix with $(p_1,\ldots,p_k)$ along the diagonal.
\end{enumerate}
Then
\be \label{FPX91}
S(X)=-\tr(X\log|X|)\qquad \forall X\in \mathbb{H}^1.
\ee
Conversely, the map $S:\mathbb{H}^{1}\to \R$ defined by \eqref{FPX91} satisfies conditions~\ref{FPXCXH7471}-\ref{FPXCXH7473}. 
\et

Before giving a proof we need to prove some preliminary results. We first set some notation. For all $n>0$, let $\mathcal{A}_n\subset \R^n$ be the subset consisting of all vectors whose entries sum to unity, let $\mathcal{Q}_n=\mathcal{A}_n\cap [-1,1]^n$, let $\Delta_n=\mathcal{A}_n\cap [0,1]^n$. We then let $\A=\bigcup_{n=1}^{\infty}\mathcal{A}_n$, $\Q=\bigcup_{n=1}^{\infty}\Q_n$, and $\Delta=\bigcup_{n=1}^{\infty}\Delta_n$. Given $p=(p_1,\ldots,p_n)\in \mathcal{Q}_n$ for some $n>0$, and $q^{(i)}=(q_{i1},\ldots,q_{im_i})\in \A_{m_i}$ for all $i\in \{1,\ldots,n\}$, we let $\bigoplus_{i=1}^{n}p_iq^{(i)}$ be the element of $\A_{m_1+\cdots+m_n}$ given by
\[
\bigoplus_{i=1}^{n}p_iq^{(i)}=\big(p_1q_{11},\ldots,p_1q_{1m_1},\ldots,p_nq_{n1},\ldots,p_nq_{nm_n}\big)\, .
\]
A function $h:\mathcal{A}\to \R$ is then said to be \emph{strongly additive} if and only if 
\be \label{STNGXADX79}
h\left(\bigoplus_{i=1}^{n}p_iq^{(i)}\right)=h(p_1,\ldots,p_n)+\sum_{i=1}^{n}p_ih\big(q^{(i)}\big)\, .
\ee

\blem \label{LMXA57}
The following statements hold.
\begin{enumerate}[i.]
\item \label{LX2}
Let $h:\A\to \R$ be the function given by $h(x_1,\ldots,x_n)=-\sum_{i=1}^{n}x_i\log|x_i|$. Then $h$ is strongly additive.
\item \label{LX4}
If $h:\A\to \R$ and $h':\A\to \R$ are strongly additive, then $h-h'$ is strongly additive. 
\item \label{LX5}
Let $h:\A\to \R$ be strongly additive, and suppose that $h(p)=-\sum_{i=1}^np_i\log(p_i)$ for all $p=(p_1,\ldots,p_n)\in \Delta$. Then $h(x_1,\ldots,x_n)=-\sum_{i=1}^{n}x_i\log|x_i|$ for all $(x_1,\ldots,x_n)\in \A$.
\end{enumerate}
\elem

\bprf
Here and throughout, we use the convention $0\log(0)=0$.

\underline{Item~\ref{LX2}}: Let $(p_1,\ldots,p_n)\in \mathcal{Q}_n$, and let $q^{(i)}=(q_{i1},\ldots,q_{im_i})\in \A_{m_i}$ for all $i\in \{1,\ldots,n\}$. We then have
\begin{align*}
h\Big(\bigoplus_{i=1}^{n}p_iq^{(i)}\Big)
&=-\sum_{i=1}^{n}\sum_{j=1}^{m_i}p_iq_{ij}\log|p_iq_{ij}|=-\sum_{i=1}^{n}\sum_{j=1}^{m_i}\left(q_{ij}p_i\log|p_i|+p_iq_{ij}\log|q_{ij}|\right)\\
&=-\sum_{i=1}^{n}p_i\log|p_i|-\sum_{i=1}^{n}p_i\sum_{j=1}^{m_i}q_{ij}\log|q_{ij}|=h(p_1,\ldots,p_n)+\sum_{i=1}^{n}p_ih\big(q^{(i)}\big)\, ,
\end{align*}
where the third equality follows from the fact that $\sum_jq_{ij}=1$. Therefore $h$ is strongly additive, as desired.

\underline{Item~\ref{LX4}}: Let $h:\A\to \R$ and $h':\A\to \R$ be strongly additive, let $(p_1,\ldots,p_n)\in \mathcal{Q}_n$ for some $n>0$, and let $q^{(i)}=(q_{i1},\ldots,q_{im_i})\in \A_{m_i}$ for all $i\in \{1,\ldots,n\}$. Then
\begin{align*}
(h-h')\Big(\bigoplus_{i=1}^{n}p_iq^{(i)}\Big)&=h\Big(\bigoplus_{i=1}^{n}p_iq^{(i)}\Big)-h'\Big(\bigoplus_{i=1}^{n}p_iq^{(i)}\Big) \\
&=h(p_1,\ldots,p_n)+\sum_{i=1}^{n}p_ih\big(q^{(i)}\big)-h'(p_1,\ldots,p_n)-\sum_{i=1}^{n}p_ih'\big(q^{(i)}\big) \\
&=(h-h')(p_1,\ldots,p_n)+\sum_{i=1}^{n}p_i(h-h')\big(q^{(i)}\big)\, ,
\end{align*}
thus $h-h'$ is strongly additive, as desired.

\underline{Item~\ref{LX5}}: Let $h^0:\A\to \R$ be the function given by $h^0(x) = -\sum_{i=1}^n x_i \log|x_i|$ for every $x=(x_1,\ldots,x_n) \in \A$. By item~\ref{LX2}, $h^0$ is strongly additive. Now let $r = h - h^0$, which is also strongly additive by item~\ref{LX4}. Furthermore, since by assumption $h(p) = -\sum_{i=1}^n p_i \log p_i = h^0(p)$ for any probability vector $p \in \Delta$, we have $r(p) = 0$ for all $p \in \Delta$. We now prove that $r(x) = 0$ for all $x \in \A$, from which the statement follows. 

We first prove that $r(p)=0$ for every $p\in \Q$. First note that appending or removing a zero entry from an element of $\A$ does not affect the value of $r$. Indeed, if $x=(x_1,\ldots,x_{n-1})\in\A_{n-1}$, then strong additivity applied to the outer vector $(1,0)\in\Delta_2$ and the inner vectors $x$ and $(1)$ gives
\[
r(x_1,\ldots,x_{n-1},0)=r(1\cdot (x_1,\ldots,x_{n-1})\oplus 0\cdot (1))=r(1,0)+r(x)=r(x)\, ,
\]
where the final equality follows from the fact that $r(1,0)=0$. It then follows that without a loss of generality we may restrict our attention to elements of $\A$ whose components are all non-zero (and similarly for elements of $\Q$ and $\Delta$). Now let $p=(p_1,\ldots,p_k)\in\mathcal{Q}_k$. If $p\in\Delta$, then $r(p)=0$. Otherwise, let
\[
s=\sum_{p_i<0}|p_i|, \qquad P=\sum_{p_i>0}p_i=1+s\, ,
\]
and let $p^{\pm}$ be the probability vectors given by
\[
p^+=\left(\frac{p_i}{P}\right)_{p_i>0}, \qquad \text{and} \qquad p^-=\left(\frac{|p_i|}{s}\right)_{p_i<0}\, .
\]
Furthermore, let $\alpha=1/P=1/(1+s)$ and $\beta=1-\alpha=s/(1+s)$, and set $y=(\alpha p_1,\ldots,\alpha p_k,\beta)$. Since $y=\alpha(p_1,\ldots,p_k)\oplus \beta(1)$, it follows from strong additivity that
\[
r(y)=r(\alpha,\beta)+\alpha r(p)=\alpha r(p)\, ,
\]
where the final equality follows from the fact that $(\alpha,\beta)\in\Delta_2$. On the other hand, up to permutation, we can decompose $y$ as the block combination $y=(1)p^+\oplus(-\beta)p^-\oplus(\beta)(1)$. Since $(1,-\beta,\beta)\in\mathcal{Q}_3$, a second application of strong additivity yields
\[
r(y)=r(1,-\beta,\beta)+1\cdot r(p^+)-\beta\cdot r(p^-)+\beta\cdot r(1)=r(1,-\beta,\beta)\, ,
\]
where we used $r(p^+)=r(p^-)=r(1)=0$ since $p^{\pm}$ and $(1)$ are all probability vectors. Equating these expressions for $r(y)$ then yields $r(p)=(1+s)r\left(1,-\frac{s}{1+s},\frac{s}{1+s}\right)$. 

We now define a function $R$ given by
\[
R(s)=(1+s)r\left(1,-\frac{s}{1+s},\frac{s}{1+s}\right),\qquad s\geq0\, ,
\]
so that $r(p)=R\big(\sum_{p_i<0}|p_i|\big)$ for every $p\in \mathcal{Q}$. Note that every $s\geq0$ is realized in this way by constructing an appropriate vector $p_s \in \mathcal{Q}_{2N}$ for some $N>0$. We now show that $R$ is in fact the zero function, which together with the fact that $r(p)=R\big(\sum_{p_i<0}|p_i|\big)$ implies that $r(p)=0$ for all $p\in \Q$.

For this, let $s,t\geq0$ and $\lambda\in[0,1]$, and choose $p_s$ and $p_t$ in $\mathcal{Q}$ with total negative masses $s$ and $t$, respectively. We then have
\begin{align*}
R\big(\lambda s+(1-\lambda)t\big) &= r\big(\lambda p_s \oplus (1-\lambda)p_t\big) \\
&= r(\lambda, 1-\lambda) + \lambda r(p_s) + (1-\lambda)r(p_t) \\
&= 0 + \lambda R(s) + (1-\lambda)R(t) \\
&=\lambda R(s) + (1-\lambda)R(t)\, ,
\end{align*}
where we utilized the fact that the total negative mass of the block vector $\lambda p_s \oplus (1-\lambda)p_t$ is exactly $\lambda s + (1-\lambda)t$, and that $r(\lambda, 1-\lambda) = 0$ since $(\lambda, 1-\lambda) \in \Delta_2$. Now since $R(0)=0$ (as $s=0$ implies $p_s \in \Delta$, where $r$ vanishes), taking $s=1$ and $t=0$ in the affine relation yields $R(u)=uR(1)$ for $0\leq u\leq1$. If $u>1$, we instead take $s=u$, $t=0$, and $\lambda=1/u$ to find $R(1) = \frac{1}{u}R(u)$, which implies $R(u)=uR(1)$. Writing $K=R(1)$, we obtain the linear relationship $R(s)=Ks$ for all $s\geq0$.

To show $K=0$, we evaluate $R(s)$ at two different values of $s$ simultaneously via strong additivity of $r$. So consider the vector
\[
a=\left(-\frac12,\frac34,\frac34\right)\in\mathcal{Q}_3\, ,
\]
which has a total negative mass of $s=1/2$. If we use $a$ as an outer vector, insert $a$ into its first block, and use $(1)$ in the other two blocks, the resulting vector is
\[
\Big(-\frac{1}{2}\Big)\left(-\frac12,\frac34,\frac34\right) \oplus \frac{3}{4}(1) \oplus \frac{3}{4}(1) = \left(\frac14,-\frac38,-\frac38,\frac34,\frac34\right)\, .
\]
The total negative mass of this new vector is exactly $3/8 + 3/8 = 3/4$. We then have
\[
\frac{3K}{4} =R\left(\frac34\right)=r(a)-\frac12r(a)+\frac34 r(1)+\frac34 r(1)=\frac12 r(a)=\frac12 R\left(\frac12\right)=\frac{K}{4}\, ,
\]
where the first and final equalities follow from the formula $R(s) = Ks$, and the second equality follows from strong additivity of $r$. It then follows that $K=0$, thus $r(p)=0$ for every $p\in\mathcal{Q}$.

Finally, let $x=(x_1,\ldots,x_n)\in\A$, let $\alpha\in(0,1)$ be such that $|\alpha x_i|\leq1$ for every $i$, and set
\[
y=(\alpha x_1,\ldots,\alpha x_n,1-\alpha)\in\mathcal{Q}_{n+1}\, .
\]
Since $r$ vanishes on $\Q$, we have $r(y)=0$. Moreover, $(\alpha,1-\alpha)\in\Delta_2$, so strong additivity gives
\[
0=r(y)=r(\alpha,1-\alpha)+\alpha r(x)+(1-\alpha)r(1)=\alpha r(x)\, ,
\]
hence $r(x)=0$. It then follows that $h=h^0$, as desired.
\eprf

\bprf[Proof of Theorem~\ref{FPXCXH747}]
Suppose $S:\mathbb{H}^1 \to \mathbb{R}$ satisfies conditions~\ref{FPXCXH7471}-\ref{FPXCXH7473} in the statement of Theorem~\ref{FPXCXH747}. It follows from unitary invariance (condition~\ref{FPXCXH7472}) that in order to show that Eq.~\eqref{FPX91} holds for all $X\in \mathbb{H}^1$, it suffices to show \eqref{FPX91} holds for all diagonal matrices, i.e., matrices of the form $\text{diag}(x_1,\ldots,x_n)$ with $(x_1,\ldots,x_n)\in \A$. Therefore, we now show that for all $X=\text{diag}(x_1,\ldots,x_n)$ with $(x_1,\ldots,x_n)\in \A$ we have
\be \label{RMX17}
S(X)=-\tr(X\log|X|)=-\sum_{i=1}^{n}x_i\log|x_i|\, .
\ee
For this, let $h:\A\to \R$ be the function given by $h(x_1,\ldots,x_n)=S(\operatorname{diag}(x_1,\ldots,x_n))$. By reduction to von~Neumann entropy (condition \ref{FPXCXH7471}), it follows that $h(p_1,\ldots,p_n)=-\sum_{i=1}^{n}p_i\log(p_i)$ for every finite probability vector $(p_1,\ldots,p_n)\in \Delta$. If we can show that $h$ is strongly additive, then it follows from item~\ref{LX5} of Lemma~\ref{LMXA57} that Eq.~\eqref{RMX17} holds for all diagonal matrices $X\in \mathbb{H}^1$.

To show $h$ is strongly additive, let $(p_1,\ldots,p_n)\in \mathcal{Q}_n$, let $q^{(i)}=(q_{i1},\ldots,q_{im_i})\in \A_{m_i}$ for all $i\in \{1,\ldots,n\}$, and let $X_i=\operatorname{diag}(q_{i1},\ldots,q_{im_i})$ for all $i\in \{1,\ldots,n\}$. Since $\sum_j q_{ij} = 1$, each $X_i$ is a unit trace Hermitian matrix, so $X_i\in\mathbb{H}_{m_i}^1$ for all $i\in \{1,\ldots,n\}$. Moreover, since
\be \label{AFXCMB67}
\operatorname{diag}\Big(\bigoplus_{i=1}^{n}p_iq^{(i)}\Big)=\bigoplus_{i=1}^n p_iX_i\, ,
\ee
we have
\begin{align*}
h\Big(\bigoplus_{i=1}^{n}p_iq^{(i)}\Big)&=S\left(\text{diag}\Big(\bigoplus_{i=1}^{n}p_iq^{(i)}\Big)\right)=S\Big(\bigoplus_{i=1}^n p_iX_i\Big) \\
&=S(\text{diag}(p_1,\ldots,p_n))+\sum_{i=1}^{n}p_iS(X_i) \\
&=S(\text{diag}(p_1,\ldots,p_n))+\sum_{i=1}^{n}p_iS(\text{diag}(q_{i1},\ldots,q_{im_i})) \\
&=h(p_1,\ldots,p_n)+\sum_{i=1}^{n}p_ih\big(q^{(i)}\big)\, ,
\end{align*}
where the third equality follows from the strong additivity of $S$ (condition~\ref{FPXCXH7473}). It then follows that $h$ is strongly additive, thus Eq~\eqref{RMX17} indeed holds for all diagonal matrices $X\in \mathbb{H}^1$, which by unitary invariance implies Eq.~\eqref{FPX91} holds for all $X\in \mathbb{H}^1$. 

Conversely, let $S:\mathbb{H}^1\to \R$ be the function given by $S(X)=-\tr(X\log|X|)$. It is immediate from its definition that $S$ satisfies reduction to von~Neumann entropy (condition~\ref{FPXCXH7471}). To show unitary invariance (condition~\ref{FPXCXH7472}), let $X\in\mathbb{H}^1$ and let $X=V\operatorname{diag}(\lambda_1,\ldots,\lambda_n)V^\dagger$ be a unitary diagonalization of $X$. We then have
\[
X\log|X|=V\operatorname{diag}\big(\lambda_1\log|\lambda_1|,\ldots,\lambda_n\log|\lambda_n|\big)V^\dagger\, ,
\]
which implies $S(X)$ may be given in terms of the eigenvalues of $X$ as
\be \label{EIGENFXN87}
S(X)=-\sum_{i=1}^n\lambda_i\log|\lambda_i|\, .
\ee
It then follows that $S(UXU^\dagger)=S(X)$ for every unitary $U$ which is the same size as $X$, thus $S$ satisfies unitary invariance (condition~\ref{FPXCXH7472}). Finally, since $S$ satisfies the eigenvalue formula \eqref{EIGENFXN87}, the proof that $S$ satisfies strong additivity (condition~\ref{FPXCXH7473}) is essentially the same as the proof of item~\ref{LX2} of Lemma~\ref{LMXA57}, thus completing the proof.
\eprf

In light of Theorem~\ref{FPXCXH747}, we will refer to the entropy functional $S$ as given by \eqref{FPX91} as the \emph{PDM entropy}. While the PDM entropy has appeared in the literature in various contexts~\cite{PaTK23,SSW14,TTC22,CiKu23,FuPa23} (as well as its extension to normal operators by defining $|X|=\sqrt{XX^{\dag}}$), the quantum information-theoretic meaning of the PDM entropy is not yet well-understood. Nevertheless, Theorem~\ref{FPXCXH747} yields precise mathematical justification for the use of PDM entropy as an extension of von~Neumann entropy to pseudo-density matrices. In particular, as von~Neumann entropy satisfies strong additivity with respect to convex combinations, it is only natural that any extension of von~Neumann entropy to pseudo-density matrices should satisfy strong additivity with respect to affine combinations within the interval $[-1,1]$ where its eigenvalues reside.   

\section{Subadditivity of PDM entropy}

Apart from unitary invariance and strong additivity, it was established in Ref.~\cite{FuPa23} that PDM entropy shares other properties in common with von~Neumann entropy, such as additivity over products, namely, 
\[
S(\tau\otimes \upsilon)=S(\tau)+S(\upsilon)\, ,
\]
and a Fannes-Audenaert type inequality~\cite{Fannes_1973,Audenaert_2007}. While subadditivity 
\[
S(\rho_{AB})\leq S(\rho_A)+S(\rho_B)
\]
is a fundamental property of von~Neumann entropy for bipartite density matrices $\rho_{AB}$, it is known that there exists unit trace, bipartite Hermitian matrices $\tau_{AB}$ which violate subadditivity. However, it was conjectured in Ref.~\cite{FuPa23} that if $\varrho_{AB}$ is in fact a 2-time pseudo-density matrix as defined by Eq.~\eqref{GBLXFXM627}, then subadditivity holds. We now prove that this conjecture holds for 2-time pseudo-density matrices whose initial state is that of a single qubit. In particular, we will prove the following:

\begin{theorem}
\label{thm:qubit-input-sot-mutual-information}
Let $\E:M_2(\mathbb C)\to M_d(\mathbb C)$ be a completely positive, trace-preserving map from $2\times 2$ complex matrices to $d\times d$ complex matrices for some $d>0$, let $\rho_A\in M_2(\mathbb C)$ be a density matrix, and let $\varrho_{AB}$ be a 2-time pseudo-density matrix given by $\varrho_{AB}=\frac{1}{2}\{\rho\otimes \mathds{1}_d, \J[\E] \}$. Then
\[
S(\varrho_{AB})\leq S(\rho_A)+S(\rho_B)\, ,
\]
where $\rho_B\equiv \tr_A(\varrho_{AB})=\E(\rho_A)$.
\end{theorem}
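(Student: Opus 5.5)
By unitary invariance of $S$ (Theorem~\ref{FPXCXH747}) I first reduce to diagonal $\rho_A$. If $\rho_A=U\operatorname{diag}(p,q)U^\dagger$ with $q=1-p$, then conjugating $\varrho_{AB}$ by $U^\dagger\otimes\mathds{1}_d$ gives the pseudo-density matrix of the pair $(\operatorname{diag}(p,q),\E\circ\operatorname{Ad}_U)$. Here $\operatorname{Ad}_U$ denotes conjugation by $U$, i.e.\ $X\mapsto UXU^\dagger$. This requires the covariance $(V\otimes\mathds{1})\J[\E](V\otimes\mathds{1})^\dagger=\J[\E\circ\operatorname{Ad}_{V^T}]$-type identity, which I will check directly from $\J[\E]=\sum_{i,j}\dyad{i}{j}\otimes\E(\dyad{j}{i})$, adjusting the unitary (transpose or conjugate) as needed. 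The reduction changes neither $S(\rho_A)=h(p)$ nor the spectrum of $\rho_B$. With $\rho_A$ diagonal, a direct computation gives the block form
\[
\varrho_{AB}=\begin{pmatrix} pA & \tfrac12 C\\ \tfrac12 C^\dagger & qD\end{pmatrix},\qquad A=\E(\dyad{0}{0}),\ D=\E(\dyad{1}{1}),\ C=\E(\dyad{1}{0}),
\]
so that $\rho_B=pA+qD$. Since $\E$ is a channel, the full matrix $\begin{pmatrix}A&C\\ C^\dagger&D\end{pmatrix}$, namely the Choi matrix up to ordering, is positive.

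Next I split the entropy by the sign of the eigenvalues. Write $\varrho_{AB}=\varrho_+-\varrho_-$ with $\tr\varrho_+=1+s$ and $\tr\varrho_-=s$. By Theorem~\ref{PDMSPXCTRA27} every eigenvalue satisfies $|\lambda|\le 1$, so each negative eigenvalue contributes $-\lambda\log|\lambda|=|\lambda|\log|\lambda|\le 0$. Hence
\[
S(\varrho_{AB})\le -\tr(\varrho_+\log\varrho_+),
\]
and it suffices to bound the entropy functional of the positive operator $\varrho_+$. For this I will compare with the block-diagonal (pinched) operator $P=pA\oplus qD$. Concavity of von~Neumann entropy gives
\[
-\tr(P\log P)=h(p)+pS(A)+qS(D)\le h(p)+S(\rho_B),
\]
so the whole theorem reduces to the single inequality
\[
-\tr(\varrho_+\log\varrho_+)+\tr(\varrho_-\log\varrho_-)\le -\tr(P\log P).
\]

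The main obstacle is this last comparison, since $t\mapsto -t\log|t|$ is not concave on all of $\R$ and the usual pinching argument fails. My first attempt is a two-parameter interpolation $\varrho(\theta)=P+\theta\,\mathrm{Off}$, where $\mathrm{Off}$ is the off-diagonal part. I would show that $\theta\mapsto S(\varrho(\theta))$ is maximised at $\theta=0$ on the relevant range, using the first-order term (which vanishes, since $\mathrm{Off}$ has zero diagonal blocks) and the second-order perturbation formula for $-\tr f(X)$.

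If that fails, I would exploit the qubit structure more concretely. The $2\times2$ block form with scalar weights $p,q$ lets one compare $\varrho_{AB}$ with the genuine state $\sigma=\begin{pmatrix}pA&\sqrt{pq}\,C\\ \sqrt{pq}\,C^\dagger&qD\end{pmatrix}$. This is $(\mathrm{id}\otimes\E)$ applied to a purification of $\rho_A$, up to transpose, and it satisfies ordinary subadditivity. The off-diagonal coefficient of $\varrho_{AB}$ is $\tfrac12\ge\sqrt{pq}$, and one must control how the extra off-diagonal weight $\tfrac12-\sqrt{pq}$ creates negative eigenvalues while decreasing $S$. I expect this eigenvalue bookkeeping, most tractably via the singular values of $(pA)^{-1/2}C(qD)^{-1/2}$ in the generic invertible case, to be the heart of the proof.
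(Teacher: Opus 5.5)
\textbf{There is a genuine gap: the key inequality is identified but not proved.} Your reduction is the same as the paper's. In the eigenbasis of $\rho_A$ you get the block form with off-diagonal block $\frac12 C$. No covariance identity is needed for this, since $\J[\E]=(\id\otimes\E)(\text{swap})$ is basis independent. After strong additivity and concavity, everything rests on $S(\varrho_{AB})\le S(pA\oplus qD)$, which is item~\ref{LSX3} of Lemma~\ref{prop:two-block-signed-pinching}. That inequality is the whole content of the theorem, and your proposal does not prove it.

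Your intermediate claim that it suffices to bound $-\tr(\varrho_+\log\varrho_+)$ is also false. For $\E=\id$ and $\rho_A=\dyad{0}{0}$, one has $-\tr(\varrho_+\log\varrho_+)=\frac12\log 2>0=S(pA\oplus qD)$. So the negative eigenvalues must be kept, as your final reduced inequality in fact does.

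\textbf{Why the local analysis at $\theta=0$ cannot work.} A vanishing first variation and a negative second variation at $\theta=0$ only make $\theta=0$ a local maximum. On the interval where $\varrho(\theta)\ge0$, ordinary concavity plus the symmetry $\theta\mapsto-\theta$ would suffice. The difficulty is the range where $\varrho(\theta)$ has negative eigenvalues, which contains $\theta=1$ for every non-positive PDM. There $\theta\mapsto S(\varrho(\theta))$ is not concave. For example, with $1\times1$ blocks $P=Q=\frac12$ and off-diagonal entry $\theta d$, the eigenvalues are $\frac12\pm\theta|d|$, and $S$ is strictly convex once $\theta|d|>\frac12$. Moreover, $pA\oplus qD$ is typically singular (pure $\rho_A$, or rank-deficient $A$ or $D$). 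There $t\mapsto -t\log|t|$ is not differentiable at $0$ and the expansion contains $\theta^2\log\theta$ terms, so the second-order formula is not even available.

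\textbf{What the paper does instead.} The missing idea is a \emph{global} monotonicity statement. The paper shows $\frac{d}{dt}S(X_t)\le0$ at every $t$ where $X_t=P\oplus Q+tZ$ is nonsingular. It writes $\frac{d}{dt}S(X_t)=\frac12\int_0^\infty\tr[Z(X_t^2+s\mathds{1})^{-1}]\,ds$ using Lemma~\ref{lem:log-absolute-resolvent-representation}. It then proves $\tr[Z(X_t^2+a^2\mathds{1})^{-1}]=\frac1a\,\mathfrak{Im}\,\tr\big((a\mathds{1}+iX_t)^{-1}iZ\big)\le0$ via the Schur complement of $a\mathds{1}+iP$, Jacobi's formula, and Lemma~\ref{lem:reduced-trace-inequality}. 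Singular points are handled by shifting to $P+\varepsilon\mathds{1}$ and $Q+\varepsilon\mathds{1}$ and using continuity.

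\textbf{Your fallback rests on a false positivity claim.} With $C=\E(\dyad{1}{0})$ in the upper-right block, $\begin{pmatrix}A&C\\ C^\dagger&D\end{pmatrix}$ is $\J[\E]$ itself. That is the partial transpose of the Choi matrix $\begin{pmatrix}A&C^\dagger\\ C&D\end{pmatrix}$, not the Choi matrix, and for $\E=\id$ it is the swap operator, with eigenvalue $-1$. Consequently:
\begin{itemize}
\item Your $\sigma$ is not $(\id\otimes\E)$ of a purification; that state carries $\sqrt{pq}\,C^\dagger$ in the upper-right block.
\item $\sigma$ need not be positive; for $\E=\id$ and $p=q=\frac12$ it is half the swap.
\item $\sigma$ is not related to $\varrho_{AB}$ by rescaling an off-diagonal coefficient, so the comparison $\frac12\ge\sqrt{pq}$ gives nothing.
\item The singular values of $(pA)^{-1/2}C(qD)^{-1/2}$ fix only the inertia of $\varrho_{AB}$ (by congruence), not its eigenvalues or its entropy.
\end{itemize}
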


We first prove some preliminary results.

\begin{lemma}
\label{lem:log-absolute-resolvent-representation}
Let $X$ be a nonsingular Hermitian matrix. Then
\[
\log|X| = \frac12 \int_0^\infty \left( \frac{1}{1+s}\mathds{1}-(X^2+s\mathds{1})^{-1} \right)ds\, .
\]
\end{lemma}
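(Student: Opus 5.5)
Since $X$ is Hermitian and nonsingular, we reduce to a scalar identity via the spectral theorem. Write $X=\sum_i \lambda_i \dyad{\psi_i}{\psi_i}$ with $\lambda_i\neq 0$ real and $\{\ket{\psi_i}\}$ orthonormal. Then $X^2=\sum_i\lambda_i^2\dyad{\psi_i}{\psi_i}$ and, for every $s\ge 0$, $(X^2+s\mathds{1})^{-1}=\sum_i (\lambda_i^2+s)^{-1}\dyad{\psi_i}{\psi_i}$, which is well defined because $\lambda_i^2+s>0$. The integrand is therefore $\sum_i f_{\lambda_i}(s)\dyad{\psi_i}{\psi_i}$ with
\[
f_\lambda(s)=\frac{1}{1+s}-\frac{1}{\lambda^2+s}\, .
\]
Since the integral of a matrix-valued function is taken entrywise in a fixed basis, it commutes with this finite spectral sum. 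It thus suffices to prove the scalar identity $\frac12\int_0^\infty f_\lambda(s)\,ds=\log|\lambda|$ for every real $\lambda\neq 0$, because the functional calculus gives $\log|X|=\sum_i\log|\lambda_i|\dyad{\psi_i}{\psi_i}$.

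For the scalar identity, set $c=\lambda^2>0$. On a truncated interval $[0,T]$ we compute
\[
\int_0^T\Big(\frac{1}{1+s}-\frac{1}{c+s}\Big)ds=\log(1+T)-\log(c+T)+\log c=\log c+\log\frac{1+T}{c+T}\, .
\]
As $T\to\infty$ the second term tends to $0$, so the improper integral equals $\log c=2\log|\lambda|$, and halving gives the claim.

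The only point needing care is convergence of the improper integral. We have $f_\lambda(s)=\frac{c-1}{(1+s)(c+s)}$, which is continuous on $[0,\infty)$ and is $O(s^{-2})$ at infinity, so the integral converges absolutely. This also justifies exchanging the limit $T\to\infty$ with the finite sum over eigenvalues. The matrix integral converges in any norm, and the stated formula follows.
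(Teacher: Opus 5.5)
Your proof is correct and takes essentially the same approach as the paper. Both prove the scalar identity $\log y=\int_0^\infty\bigl(\frac{1}{1+s}-\frac{1}{y+s}\bigr)ds$ by truncating at a finite endpoint, apply it with $y=\lambda^2$, and lift it to $X$ through the spectral decomposition. Your explicit $O(s^{-2})$ convergence remark is a small addition that the paper leaves implicit.
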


\begin{proof}
For $y>0$,
\[
\log y
=
\int_0^\infty
\left(
\frac{1}{1+s}-\frac{1}{y+s}
\right)ds\, .
\]
Indeed, integrating up to $R>0$ gives
\[
\int_0^R
\left(
\frac{1}{1+s}-\frac{1}{y+s}
\right)ds
=
\log(1+R)-\log(y+R)+\log y\, ,
\]
which converges to $\log y$ as $R\to\infty$. Applying this identity to
$y=x^2$ gives
\[
\log|x|
=
\frac12 \int_0^\infty \left( \frac{1}{1+s}-\frac{1}{x^2+s} \right)ds,
\qquad x\ne0\, .
\]
Now let $X=\sum_k\lambda_kP_k$ be the spectral decomposition of $X$. Since $X$ is nonsingular, each $\lambda_k$ is nonzero. Moreover,
\[
X^2+s\mathds{1}=\sum_k(\lambda_k^2+s)P_k\, ,
\]
and hence
\[
(X^2+s\mathds{1})^{-1}
=
\sum_k\frac{1}{\lambda_k^2+s}P_k \, .
\]
Therefore,
\begin{align*}
\log|X|&=\sum_k \log|\lambda_k|P_k=\frac12\int_0^\infty
\sum_k
\left(
\frac{1}{1+s}
-
\frac{1}{\lambda_k^2+s}
\right)P_k\,ds\\
&=\frac12\int_0^\infty
\left(
\sum_k\frac{P_k}{1+s}
-
\sum_k\frac{P_k}{\lambda_k^2+s}
\right)ds\\
&=\frac12\int_0^\infty
\left(
\frac{1}{1+s}\mathds{1}
-
(X^2+s\mathds{1})^{-1}
\right)ds \, ,
\end{align*}
as desired.
\end{proof}

\begin{lemma}
\label{lem:reduced-trace-inequality}
Let $R,K,C$ be operators on a finite-dimensional Hilbert space such that $R,K\ge0$ and $C=C^\dagger$,  and suppose that $u\geq 0$ is such that
$\mathds{1}-uR\ge0$ and $C+uK\ge0$, and let $L=(\mathds{1}+C^2)^{-1}$. Then
\[
\tr[KL]+\mathfrak{Re}\,\tr[CLR]\ge0\, .
\]
\end{lemma}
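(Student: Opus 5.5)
The idea is to diagonalize $C$ and exploit the fact that $L=(\mathds{1}+C^2)^{-1}$ and $CL$ are functions of $C$, so both commute with the spectral projections of $C$. With $f(c)=c/(1+c^2)$ we have $CL=LC=f(C)$, which is Hermitian. Since $R$ is Hermitian too, $\tr[CLR]=\tr[f(C)R]$ is real, and the real part can be dropped. The goal is then
\[
\tr[KL]+\tr[f(C)R]\ge 0 .
\]
The first term is handled using the constraint $C+uK\ge0$ on the negative part of $C$. The second is handled using $R\le \frac1u\mathds{1}$, and the two resulting bounds will cancel exactly.

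\textbf{The case $u=0$.} The hypotheses reduce to $C\ge0$, so $f(C)\ge0$. Both $\tr[KL]$ and $\tr[f(C)R]$ are traces of products of two positive operators, hence nonnegative, and the claim follows. From now on assume $u>0$.

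\textbf{Step 1: split by the sign of $C$.} Let $P_-$ be the spectral projection of $C$ onto its negative eigenvalues and $P_+=\mathds{1}-P_-$. Write $f(C)=f_+-f_-$, where $f_\pm\ge0$ are the positive and negative parts, and note that $f_-=-P_-CLP_-=-P_-f(C)$.

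\textbf{Step 2: lower bound on $\tr[KL]$.} Since $L$ commutes with $P_\pm$ and $P_+P_-=0$, the cross terms vanish and
\[
\tr[KL]=\tr[P_-KP_-L]+\tr[P_+KP_+L].
\]
The second term is nonnegative because $P_+KP_+\ge0$ and $L\ge0$. Compressing $uK\ge -C$ by $P_-$ gives $P_-KP_-\ge -\tfrac1u P_-CP_-$. Pairing this with $L\ge0$ (via $L^{1/2}\cdot L^{1/2}$) yields
\[
\tr[P_-KP_-L]\ge -\tfrac1u\tr[P_-CL]=\tfrac1u\tr f_- .
\]

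\textbf{Step 3: lower bound on $\tr[f(C)R]$.} We have $\tr[f_+R]\ge0$. For the negative part, $R\le\frac1u\mathds{1}$ gives
\[
\tr[f_-R]=\tr[f_-^{1/2}Rf_-^{1/2}]\le\tfrac1u\tr f_- .
\]
Hence $\tr[f(C)R]\ge-\tfrac1u\tr f_-$.

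\textbf{Step 4: combine.} Adding the bounds from Steps 2 and 3, the terms $\pm\tfrac1u\tr f_-$ cancel and the sum is $\ge0$.

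The main obstacle is Step 2. The naive bound $\tr[KL]\ge-\frac1u\tr[CL]$ loses the positive-eigenvalue part of $C$ with the wrong sign, so it is too weak. This is why we restrict the constraint $C+uK\ge0$ to the negative spectral subspace of $C$. That restriction is legitimate only because $L$ is block diagonal with respect to $P_\pm$.
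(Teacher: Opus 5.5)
Your proof is correct and is essentially the paper's argument. The paper also diagonalizes $C$, but then shows that each summand $\ell_j(K_{jj}+c_jR_{jj})$ is nonnegative, using $R_{jj}\le 1/u$ and $K_{jj}\ge -c_j/u$ when $c_j<0$; your Steps~2--3 run the same estimates summed over the negative spectral subspace $P_-$ instead of eigenvector by eigenvector, and your $u=0$ case matches the paper's.
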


\begin{proof}
If $u=0$, then $C\ge0$. Since $L$ is a positive function of $C$, we have $L\ge0$, $CL=LC$, and $CL\ge0$. Hence $\tr[KL]\ge0$ and $\tr[CLR]\ge0$. Now assume $u>0$, diagonalize $C$, and write
\[
C=\operatorname{diag}(c_1,\ldots,c_n)\, ,
\qquad
L=\operatorname{diag}(\ell_1,\ldots,\ell_n)\, ,
\qquad
\ell_j=\frac{1}{1+c_j^2}>0\, ,
\]
so that
\[
\tr[KL]+\mathfrak{Re}\,\tr[CLR]=\sum_j\ell_jK_{jj}+\sum_jc_j\ell_jR_{jj}=\sum_j\ell_j(K_{jj}+c_jR_{jj})\, .
\]
Since $\ell_j>0$ for all $j$, it suffices to prove that $K_{jj}+c_jR_{jj}\geq 0$ for all $j$. Now since $K,R\ge0$, we have $K_{jj},R_{jj}\ge0$. Moreover, $\mathds{1}-uR\ge0$ gives
\[
R_{jj}\le\frac1u\, ,
\]
while $C+uK\ge0$ gives
\[
K_{jj}\ge-\frac{c_j}{u}\, .
\]
Therefore, if $c_j\ge0$, then $K_{jj}+c_jR_{jj}\ge0$. If $c_j<0$, then
\[
K_{jj}+c_jR_{jj}
\ge
-\frac{c_j}{u}+\frac{c_j}{u}=0\, ,
\]
thus for all $j$ we have $K_{jj}+c_jR_{jj}\geq 0$, as desired.
\end{proof}

\begin{lemma} \label{prop:two-block-signed-pinching}
Let $P,Q\ge0$ be positive operators on the same finite-dimensional Hilbert space, let $D$ be an arbitrary operator on that space, and for $0\le t\le1$ set
\[
X_t=
\begin{pmatrix}
P&tD\\
tD^\dagger&Q
\end{pmatrix}\, ,
\qquad \text{and} \qquad
Z=
\begin{pmatrix}
0&D\\
D^\dagger&0
\end{pmatrix}\, .
\]
Then the following statements hold.
\begin{enumerate}[i.]
\item \label{LSX0}
The function $t \mapsto S(X_t)$ is differentiable at every $t_0 \in [0,1]$ for which $X_{t_0}$ is nonsingular,
where $S$ is the PDM entropy.
\item \label{LSX1}
At every $t_0$ for which $X_{t_0}$ is nonsingular,
\begin{equation} \label{eq:entropy-derivative-resolvent}
\left.\frac{d}{dt}S(X_t)\right|_{t=t_0}=\frac12 \int_0^\infty \tr\left[Z(X_{t_0}^2+s\mathds{1})^{-1}\right]ds\, ,
\end{equation}
where $S$ is the PDM entropy.
\item \label{LSX2}
For every $a>0$ and every $0\le t\le1$,
\begin{equation} \label{eq:resolvent-sign-main}
\tr\left[Z(X_t^2+a^2\mathds{1})^{-1}\right]\le0\, .
\end{equation}
\item \label{LSX3}
$S
\begin{pmatrix}
P&D\\
D^\dagger&Q
\end{pmatrix}
\le
S(P\oplus Q)
$, where $S$ is the PDM entropy.
\end{enumerate}
\end{lemma}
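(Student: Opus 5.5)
The plan is to prove items~\ref{LSX0}--\ref{LSX3} in order: treat $S$ as a smooth spectral function near nonsingular points, rewrite its derivative using Lemma~\ref{lem:log-absolute-resolvent-representation}, establish the sign of that derivative (the hard part), and then integrate along the segment $t\in[0,1]$.

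\emph{Items~\ref{LSX0} and~\ref{LSX1}.} Write $S(X)=-\tr g(X)$ with $g(x)=x\log|x|$. This $g$ is real-analytic on $\R\setminus\{0\}$. If $X_{t_0}$ is nonsingular, then by continuity of eigenvalues the spectra of all $X_t$ with $t$ near $t_0$ lie in a compact set avoiding $0$. On such a set $g$ coincides with a smooth function, so $t\mapsto \tr g(X_t)$ is differentiable there. Moreover, since $X_t$ is affine in $t$ with $\dot X_t=Z$, the standard trace-derivative formula gives
\[
\frac{d}{dt}\tr g(X_t)=\tr[g'(X_t)Z],
\]
with $g'(x)=\log|x|+1$. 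Hence the derivative of $S(X_t)$ at $t_0$ is $-\tr[Z\log|X_{t_0}|]-\tr Z$. The second term vanishes because $Z$ is off-diagonal, so $\tr Z=0$. Inserting Lemma~\ref{lem:log-absolute-resolvent-representation} for $\log|X_{t_0}|$, the term $\frac{1}{1+s}\tr Z$ again vanishes. Trace and integral may be interchanged because the integrand is a finite matrix that is $O(s^{-2})$ as $s\to\infty$. This yields Eq.~\eqref{eq:entropy-derivative-resolvent}.

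\emph{Item~\ref{LSX2}.} This is the main obstacle. At $t=0$ the claim is an equality: $(X_0^2+a^2\mathds{1})^{-1}$ is block diagonal while $Z$ is block off-diagonal, so the trace is $0$. For $t>0$, I would rescale by setting $C=X_t/a$, so that $L=(\mathds{1}+C^2)^{-1}=a^2(X_t^2+a^2\mathds{1})^{-1}$, and use $tZ=X_t-X_0$ with $X_0=P\oplus Q$. This gives
\[
t\,\tr[ZL]=a\,\tr[CL]-\tr[X_0L].
\]
The target is therefore $\tr[X_0L]-a\,\tr[CL]\ge0$, which has the shape of Lemma~\ref{lem:reduced-trace-inequality}. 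I would take $K$ to be a positive multiple of $X_0=P\oplus Q\ge0$. For $R$, I would exploit the block structure: the reflection $J=\mathds{1}\oplus(-\mathds{1})$ satisfies $JZJ=-Z$ and $JX_tJ=X_0-tZ$, so $X_0=\tfrac12(X_t+JX_tJ)$. I would then try $R$ built from the block projectors $\tfrac12(\mathds{1}\pm J)$, with $u$ chosen so that $\mathds{1}-uR\ge0$ and $C+uK\ge0$. The second condition amounts to $X_t+u'X_0\ge0$, which I would try to secure using $X_t+JX_tJ=2X_0\ge0$. The concrete sub-goal is to match $\tr[KL]+\mathfrak{Re}\,\tr[CLR]$ with a positive multiple of $\tr[X_0L]-a\,\tr[CL]$. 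If this direct matching fails, I would instead apply the lemma twice, once for each block projector, and add the two resulting inequalities.

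\emph{Item~\ref{LSX3}.} Combining items~\ref{LSX1} and~\ref{LSX2} with $s=a^2$, the derivative $\frac{d}{dt}S(X_t)$ is $\le0$ wherever $X_t$ is nonsingular. Also, $t\mapsto S(X_t)$ is continuous on $[0,1]$, since $g$ is continuous and eigenvalues depend continuously on $t$. First suppose $P,Q>0$. Then $X_0$ is nonsingular, so the polynomial $\det X_t$ is not identically zero and has only finitely many roots in $[0,1]$. Integrating the nonpositive derivative on the intervals between these roots, and using continuity at the roots, gives $S(X_1)\le S(X_0)=S(P\oplus Q)$. For general $P,Q\ge0$, I would replace them by $P+\varepsilon\mathds{1}$ and $Q+\varepsilon\mathds{1}$, apply the previous case, and let $\varepsilon\to0$ using continuity of $S$.
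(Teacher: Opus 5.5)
Your items i, ii and iv are sound and follow the paper's route: the trace-derivative formula combined with Lemma~\ref{lem:log-absolute-resolvent-representation}, then the $\varepsilon\mathds{1}$-regularization and the finite-root monotonicity argument. The gap is item iii, which carries all the content. The plan you sketch for it cannot succeed.

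\textbf{Why the sketch fails.} Your reformulation is correct: with $C=X_t/a$ and $L=(\mathds{1}+C^2)^{-1}$, the claim is $\tr[X_0L]\ge a\,\tr[CL]$. Write $\alpha=\tr[PL_{11}]\ge0$, $\beta=\tr[QL_{22}]\ge0$ and $\gamma=t\,\mathfrak{Re}\,\tr[DL_{21}]$. Then $\tr[X_0L]-\tr[X_tL]=-2\gamma$, so the target is simply $\gamma\le0$. Now apply Lemma~\ref{lem:reduced-trace-inequality} with $K=\kappa X_0$ and $R=r_1(\mathds{1}\oplus0)+r_2(0\oplus\mathds{1})$, where $r_i\ge0$. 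It yields $a\kappa(\alpha+\beta)+r_1(\alpha+\gamma)+r_2(\beta+\gamma)\ge0$, which is a \emph{lower} bound on $\gamma$. Because $R\ge0$, the $C$-term always enters with a nonnegative coefficient, whereas your target needs the coefficient $-a$. Applying the lemma once per block projector and adding changes nothing.

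Replacing $C$ by $-C$, which leaves $L$ unchanged, does give an upper bound: $\gamma\le\bigl(a\kappa(\alpha+\beta)-r_1\alpha-r_2\beta\bigr)/(r_1+r_2)$. Making this nonpositive forces $\max(r_1,r_2)\ge a\kappa$. Then $\mathds{1}-uR\ge0$ gives $u\kappa a\le1$. The remaining hypothesis $-C+uK\ge0$ then reads $-tZ\ge(1-u\kappa a)X_0\ge0$, which is impossible for a nonzero traceless $Z$. Separately, your condition $X_t+u'X_0\ge0$ already fails when $X_0$ is singular, e.g.\ $P=Q=0$. So no choice of this type applied to $X_t$ on the full space closes the argument.

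\textbf{The missing idea.} The paper reduces to a one-block problem before invoking the lemma.
\begin{itemize}
\item It sets $M_t=a\mathds{1}+iX_t$ and observes that $a\,\tr[Z(X_t^2+a^2\mathds{1})^{-1}]=\mathfrak{Im}\,\tr(M_t^{-1}iZ)=\mathfrak{Im}\,\frac{d}{dt}\log\det M_t$.
\item Via the Schur complement of $a\mathds{1}+iP$, it writes $\det M_t=\det(a\mathds{1}+iP)\det G_{t^2}$. Here $G_u=a\mathds{1}+uR+i(Q-uK)$, with the new operators $R=aD^\dagger(a^2\mathds{1}+P^2)^{-1}D\ge0$ and $K=D^\dagger P(a^2\mathds{1}+P^2)^{-1}D\ge0$.
\item Jacobi's formula turns the claim into $\mathfrak{Im}\,\tr(G_u^{-1}(R-iK))\le0$ for $u>0$.
\item After congruence by $H_u^{-1/2}$, with $H_u=a\mathds{1}+uR$, this is exactly $-\tr[\widetilde KL_u]-\mathfrak{Re}\,\tr[C_uL_u\widetilde R]\le0$, where $C_u=\widetilde Q-u\widetilde K$.
\end{itemize}
The lemma's hypotheses now hold. $P\ge0$ makes $K\ge0$. $Q\ge0$ gives $C_u+u\widetilde K=\widetilde Q\ge0$. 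Finally, $\mathds{1}-u\widetilde R=aH_u^{-1}\ge0$.

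The lemma is tailored to this compressed operator $G_u$, not to $X_t$ itself. Without this reduction or a genuine substitute, item iii remains unproved, and with it item iv.
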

\begin{proof}
The family $X_t$ gives a continuous interpolation between the block-diagonal operator $X_0=P\oplus Q$ and the full block operator 
\[
X_1=
\begin{pmatrix}P&D\\ 
D^\dagger&Q
\end{pmatrix}\, ,
\]
thus item~\ref{LSX3} is the inequality $S(X_1)\le S(X_0)$. We first establish differentiability and the trace derivative formula in item~\ref{LSX0}, the resolvent integral formula in item~\ref{LSX1}, and the resolvent estimate in item~\ref{LSX2}, and then use them to prove item~\ref{LSX3}.

\underline{Item~\ref{LSX0}}: Let $t_0 \in [0,1]$ be such that $X_{t_0}$ is nonsingular, and let $\sigma(X_t)$ denote the spectrum of $X_t$ for all $t\in [0,1]$. Because the matrix entries of $X_t = X_0 + tZ$ are affine in $t$, the roots of its characteristic polynomial---and thus the eigenvalues of $X_t$---depend continuously on $t$. Since $0 \notin \sigma(X_{t_0})$, there exists a neighborhood $(t_0-\delta, t_0+\delta)$ such that for all $t$ in this interval, the spectrum $\sigma(X_t)$ is strictly bounded away from zero. Consequently, there exists a compact set $K \subset \mathbb{R} \setminus \{0\}$ containing $\sigma(X_t)$ for all $t \in (t_0-\delta, t_0+\delta)$.

The function $f(x) = -x\log|x|$ is continuously differentiable on $K$. By the Weierstrass approximation theorem, we can choose a sequence of polynomials $P_n(x)$ converging uniformly to $f(x)$ on $K$, such that their derivatives $P_n'(x)$ converge uniformly to $f'(x)$ on $K$. For each $n$, the scalar function $g_n(t) = \tr[P_n(X_t)]$ is a linear combination of terms of the form $\tr(X_t^k)$. By the product rule and the cyclic property of the trace,
\[
\frac{d}{dt}\tr(X_t^k) = \tr\left(\sum_{j=0}^{k-1} X_t^j \frac{dX_t}{dt} X_t^{k-1-j}\right) = \tr\left(\sum_{j=0}^{k-1} X_t^j Z X_t^{k-1-j}\right) = k\tr(X_t^{k-1} Z)\, .
\]
By linearity, it follows that $g_n(t)$ is differentiable with derivative $g_n'(t) = \tr[P_n'(X_t)Z]$. Due to the uniform convergence on $K$, $g_n(t)$ converges uniformly to $S(X_t) = \tr[f(X_t)]$, and $g_n'(t)$ converges uniformly to $\tr[f'(X_t)Z]$. It then follows from standard analysis that the limit function $S(X_t)$ is differentiable at $t_0$, as desired. 

\underline{Item~\ref{LSX1}}: As established in the proof of item~\ref{LSX0}, the derivative of $S(X_t)$ at $t_0$ is given by the limit of the polynomial derivatives, which yields
\[
\left.\frac{d}{dt}S(X_t)\right|_{t=t_0} = \tr[f'(X_{t_0})Z]\, .
\]
Substituting $f'(x) = -(\log|x| + 1)$ gives
\[
\left.\frac{d}{dt}S(X_t)\right|_{t=t_0} = -\tr\left[Z(\log|X_{t_0}| + \mathds{1})\right] = -\tr[Z\log|X_{t_0}|]\, ,
\]
where we utilized $\tr[Z]=0$. Applying Lemma~\ref{lem:log-absolute-resolvent-representation} and again using $\tr[Z]=0$, we obtain
\begin{align*}
\left.\frac{d}{dt}S(X_t)\right|_{t=t_0}&=-\frac12\tr\left[Z\int_0^\infty\left(\frac{1}{1+s}\mathds{1}-(X_{t_0}^2+s\mathds{1})^{-1}\right)ds\right]\\
&=\frac12\int_0^\infty\tr\left[Z(X_{t_0}^2+s\mathds{1})^{-1}\right]ds\, ,
\end{align*}
which proves item~\ref{LSX1}.

\underline{Item~\ref{LSX2}}: Fix $a>0$ and define $M_t=a\mathds{1}+iX_t$ for $0\le t\le1$. Since $X_t$ is Hermitian and $a>0$, $M_t$ is invertible, with
\[
M_t^{-1}=(a\mathds{1}-iX_t)(a^2\mathds{1}+X_t^2)^{-1}\, .
\]
Since $X_t(a^2\mathds{1}+X_t^2)^{-1}$ and $Z$ are Hermitian, $\tr\left(X_t(a^2\mathds{1}+X_t^2)^{-1}Z\right)$ is real. Hence
\begin{align*}
\mathfrak{Im}\,\tr(M_t^{-1}iZ)
&=\mathfrak{Im}\,\tr\left((X_t+ia\mathds{1})(a^2\mathds{1}+X_t^2)^{-1}Z\right)\notag\\
&=a\tr\left(Z(X_t^2+a^2\mathds{1})^{-1}\right)\, .
\end{align*}
Since $a>0$, \eqref{eq:resolvent-sign-main} is equivalent to
\begin{equation} \label{RSVSGN55}
\mathfrak{Im}\,\tr(M_t^{-1}iZ)\le0\, .
\end{equation}

To prove \eqref{RSVSGN55}, let $A_a=a\mathds{1}+iP$, so that
\[
M_t=\begin{pmatrix}A_a&itD\\ itD^\dagger&a\mathds{1}+iQ\end{pmatrix}\, .
\]
For a block matrix $M=\begin{pmatrix}A&B\\ C&E\end{pmatrix}$ with $A$ invertible, we denote the Schur complement of $A$ in $M$ by $M/A=E-CA^{-1}B$, for which $\det(M)=\det(A)\det(M/A)$. Since $A_a$ is invertible and
\[
A_a^{-1}=(a\mathds{1}-iP)(a^2\mathds{1}+P^2)^{-1}\, ,
\]
the Schur complement of $A_a$ in $M_t$ is
\[
M_t/A_a=a\mathds{1}+iQ+t^2D^\dagger A_a^{-1}D=a\mathds{1}+t^2R+i(Q-t^2K)=G_{t^2}\, ,
\]
where
\[
R=D^\dagger a(a^2\mathds{1}+P^2)^{-1}D\ge0,\qquad \text{and} \qquad K=D^\dagger P(a^2\mathds{1}+P^2)^{-1}D\ge0\, .
\]
Consequently,
\[
\det(M_t)=\det(A_a)\det(G_{t^2})\, .
\]

For a differentiable family of invertible matrices $N_t$, Jacobi's determinant formula is
\begin{equation} \label{JACDET}
\frac{d}{dt}\det(N_t)=\det(N_t)\tr\left(N_t^{-1}\frac{dN_t}{dt}\right)\, .
\end{equation}
Since $M_t$ and $A_a$ are invertible, the identity $\det(M_t)=\det(A_a)\det(G_{t^2})$ shows that $G_{t^2}$ is also invertible. Applying \eqref{JACDET} to $M_t$ and $G_{t^2}$ gives
\[
\tr(M_t^{-1}iZ)=\frac{\frac{d}{dt}\det(M_t)}{\det(M_t)}=\frac{\frac{d}{dt}\det(G_{t^2})}{\det(G_{t^2})}=\tr\left(G_{t^2}^{-1}\frac{d}{dt}G_{t^2}\right)\, .
\]
Since $\frac{d}{dt}G_{t^2}=2t(R-iK)$, we obtain
\begin{equation} \label{JACSHR}
\mathfrak{Im}\,\tr(M_t^{-1}iZ)=2t\,\mathfrak{Im}\,\tr\left(G_{t^2}^{-1}(R-iK)\right)\, .
\end{equation}
For $t=0$, \eqref{RSVSGN55} follows immediately from \eqref{JACSHR}. For $t>0$, \eqref{RSVSGN55} is equivalent to
\[
\mathfrak{Im}\,\tr\left(G_{t^2}^{-1}(R-iK)\right)\le0\, .
\]
For $u>0$, define
\[
G_u=a\mathds{1}+uR+i(Q-uK)\, .
\]
Then $G_u$ is invertible. Indeed, if $G_uv=0$, then
\[
0=\mathfrak{Re}\,\langle v,G_uv\rangle=a\langle v,v\rangle+u\langle v,Rv\rangle\, .
\]
Since $a>0$ and $R\ge0$, this implies $v=0$. It therefore suffices to prove
\begin{equation} \label{GUSGN}
\mathfrak{Im}\,\tr\left(G_u^{-1}(R-iK)\right)\le0 \qquad \forall u>0\, .
\end{equation}
For this, set $H_u=a\mathds{1}+uR>0$ and normalize $R$, $K$, and $Q$ with respect to $H_u$ to obtain
\[
\widetilde R=H_u^{-1/2}RH_u^{-1/2},\qquad \widetilde K=H_u^{-1/2}KH_u^{-1/2},\qquad \text{and} \qquad \widetilde Q=H_u^{-1/2}QH_u^{-1/2}\, .
\]
These operators are positive, and
\[
\mathds{1}-u\widetilde R=H_u^{-1/2}(H_u-uR)H_u^{-1/2}=aH_u^{-1}\ge0\, .
\]
Define $C_u=\widetilde Q-u\widetilde K$. Then $C_u=C_u^\dagger$, $C_u+u\widetilde K=\widetilde Q\ge0$, and
\[
G_u=H_u^{1/2}(\mathds{1}+iC_u)H_u^{1/2}\, .
\]
If $L_u=(\mathds{1}+C_u^2)^{-1}$, then $(\mathds{1}+iC_u)^{-1}=(\mathds{1}-iC_u)L_u$. Since $L_u$ is a function of $C_u$, the operators $L_u$ and $C_uL_u$ are Hermitian. Hence $\tr(L_u\widetilde R)$ and $\tr(C_uL_u\widetilde K)$ are real. Using cyclicity of the trace, we therefore obtain
\begin{align*}
\mathfrak{Im}\,\tr\left(G_u^{-1}(R-iK)\right)
&=\mathfrak{Im}\,\tr\left((\mathds{1}+iC_u)^{-1}(\widetilde R-i\widetilde K)\right)\\
&=\mathfrak{Im}\,\tr\left(L_u\widetilde R-iL_u\widetilde K-iC_uL_u\widetilde R-C_uL_u\widetilde K\right)\\
&=-\tr(\widetilde K L_u)-\mathfrak{Re}\,\tr(C_uL_u\widetilde R)\, .
\end{align*}
The operators $\widetilde R$, $\widetilde K$, and $C_u$ satisfy the hypotheses of Lemma~\ref{lem:reduced-trace-inequality}, and therefore
\[
\tr(\widetilde K L_u)+\mathfrak{Re}\,\tr(C_uL_u\widetilde R)\ge0\, .
\]
It then follows that for all $u>0$ we have
\[
\mathfrak{Im}\,\tr\left(G_u^{-1}(R-iK)\right)=-\tr(\widetilde K L_u)-\mathfrak{Re}\,\tr(C_uL_u\widetilde R)\le0\, ,
\]
thus \eqref{GUSGN} holds, as desired.

\underline{Item~\ref{LSX3}}: Let $\varepsilon>0$ and define the regularized matrix
\[
X_t^\varepsilon = X_t + \varepsilon\mathds{1} = \begin{pmatrix}P+\varepsilon\mathds{1}&tD\\ tD^\dagger&Q+\varepsilon\mathds{1}\end{pmatrix} \, .
\]
Notice that $X_t^\varepsilon$ takes the exact same block form as $X_t$, with positive operators $P_\varepsilon = P+\varepsilon\mathds{1}$ and $Q_\varepsilon = Q+\varepsilon\mathds{1}$. At $t=0$, we have $X_0^\varepsilon = P_\varepsilon \oplus Q_\varepsilon$. Because $P, Q \ge 0$, the blocks $P_\varepsilon$ and $Q_\varepsilon$ are strictly positive definite, meaning $X_0^\varepsilon$ is nonsingular and $\det(X_0^\varepsilon) > 0$.

The determinant $\det(X_t^\varepsilon)$ is a polynomial in $t$. Because it evaluates to a strictly positive number at $t=0$, it is not the zero polynomial, and therefore has at most finitely many roots in the interval $[0,1]$. Let these roots be $s_1 < s_2 < \dots < s_k$. These finitely many points partition $[0,1]$ into a finite number of open subintervals. 

On each such open subinterval, $X_t^\varepsilon$ is nonsingular. Thus, by items~\ref{LSX1} and \ref{LSX2}, we have
\[
\frac{d}{dt}S(X_t^\varepsilon)\le0\, .
\]
This implies that $S(X_t^\varepsilon)$ is non-increasing on each open subinterval. Furthermore, because the map $X \mapsto S(X)$ is continuous on the space of finite-dimensional Hermitian matrices, the composition $t \mapsto S(X_t^\varepsilon)$ is continuous on the entire closed interval $[0,1]$. A continuous function that is non-increasing on the interior of adjacent subintervals is non-increasing everywhere, which yields
\[
S(X_1^\varepsilon)\le S(X_0^\varepsilon) \, .
\]
Finally, since $X_j^\varepsilon \to X_j$ as $\varepsilon\to 0$ for $j=0,1$, the continuity of $S$ on the space of Hermitian matrices guarantees that the inequality is preserved in the limit, which yields
\[
S(X_1)\le S(X_0) \, ,
\]
as desired.
\end{proof}

\begin{proof}[Proof of Theorem~\ref{thm:qubit-input-sot-mutual-information}]
Choose an orthonormal eigenbasis $\{\ket{0},\ket{1}\}$ of $\rho_A$ and write
\[
\rho_A=p\dyad{0}{0}+q\dyad{1}{1}\, ,\qquad p,q\ge0,\qquad p+q=1 \, .
\]
We also set
\[
U=\E(\dyad{0}{0})\, ,\qquad V=\E(\dyad{1}{1})\, ,\qquad \text{and} \qquad W=\E(\dyad{1}{0}) \, .
\]
Since $\E$ is completely positive and trace preserving, $U$ and $V$ are valid density matrices. By evaluating the anti-commutator $\varrho_{AB}=\frac{1}{2}\{\rho_A\otimes \mathds{1}_d, \J[\E]\}$ in this basis we then have
\[
\varrho_{AB}=
\begin{pmatrix}
pU&\frac{p+q}{2}W\\ 
\frac{p+q}{2}W^\dagger&qV
\end{pmatrix}
=
\begin{pmatrix}
pU&\frac12 W\\ 
\frac12 W^\dagger&qV
\end{pmatrix}\, ,
\]
where the off-diagonal simplifies as $p+q=1$. Applying Lemma~\ref{prop:two-block-signed-pinching} (\ref{LSX3}) with $P=pU$, $Q=qV$, and $D=W/2$ then yields
\[
S(\varrho_{AB})\le S(pU\oplus qV) \, .
\]
Moreover, since
\begin{align*}
S(pU\oplus qV)&=S(\text{diag}(p,q))+pS(U)+qS(V) && \text{by strong additivity} \\
&=S(\rho_A)+pS(U)+qS(V) && \text{since $\rho_A=p\dyad{0}{0}+q\dyad{1}{1}$} \\
&\leq S(\rho_A)+S(pU+qV) && \text{by concavity of von~Neumann entropy} \\
&=S(\rho_A)+S(\rho_B) && \text{since $\rho_B=pU+qV$}
\end{align*}
we then have
\[
S(\varrho_{AB})\leq S(pU\oplus qV)\leq S(\rho_A)+S(\rho_B) \, ,
\]
as desired.
\end{proof}

\section{Some fundamental examples}

In this section we consider some fundamental examples of dynamics in the context of quantum information, and obtain some results regarding the entropy of the associated pseudo-density matrices. Given a state-channel pair $(\rho,\E)$, we will denote the associated 2-time pseudo-density matrix by $\E\star \rho$, so that
\[
\E\star \rho=\frac{1}{2}\big\{\rho\otimes \mathds{1}\, ,\J[\E]\big\}\, .
\] 
In such a case, we will often tacitly assume that the input and output systems of the channel $\E$ are labeled as $A$ and $B$, respectively, even if not explicitly stated as such. The algebra of complex $d\times d$ matrices will be denoted by $M_d(\C)$.

We first consider the case of unitary evolution. The following result was first proved in Ref.~\cite{FuPa23}, which may be interpreted as a dynamical formulation of the fact that unitary evolution does not introduce any further uncertainty into the state of a quantum system. In fact, this result was one of our original motivations for the further investigation of what we now refer to as PDM entropy. 

\bn[\cite{FuPa23}] \label{UNTRXY71}
Suppose $\mathcal{U}:M_d(\C)\to M_d(\C)$ is a unitary channel, so that there exists a unitary operator $U$ on $\C^d$ such that $\mathcal{U}(X)=UXU^{\dag}$ for every $X\in M_d(\C)$. Then $S(\mathcal{U}\star \rho)=S(\rho)$ for every density matrix $\rho\in M_d(\C)$.
\en

We now consider the case of depolarizing channels, which model quantum systems undergoing isotropic decoherence due to system-environment interactions.

\bn
\label{DPLQBT58}
Consider the qubit depolarizing channel
\[
\mathcal D_p(X)=pX+(1-p)\tr(X)\mathds{1}_2/2,\qquad 0\le p\le1\, ,
\]
so that $\mathcal D_0$ is the completely depolarizing channel and $\mathcal D_1$ is the identity channel. For every density matrix $\rho\in M_2(\mathbb C)$, the function $p\mapsto S(\mathcal D_p\star\rho)$ is non-increasing on $[0,1]$, and
\[
S(\rho)\le S(\mathcal D_p\star\rho)\le S(\rho)+\log2\, ,
\]
with the upper and lower bounds being attained for $p=0$ and $p=1$, respectively. 
\en

\begin{proof}
For a composition of channels $\F\circ \E$ the associated Jamio{\l}kowski operator satisfies
\[
\J[\mathcal F\circ\E]=(\id_A\otimes\mathcal F)(\J[\E])\, ,
\]
thus
\begin{equation} \label{STCMP67}
\begin{aligned}
(\F\circ\E)\star\rho&=\frac{1}{2}\big\{\rho\otimes \mathds{1}\, ,\J[\F\circ \E]\big\}=\frac{1}{2}\big\{\rho\otimes\mathds{1},(\id_A\otimes\mathcal F)(\J[\E])\big\} \\
&=(\id_A\otimes\mathcal F)\left(\frac{1}{2}\big\{\rho\otimes\mathds{1},\J[\E]\big\}\right)=(\id_A\otimes\mathcal F)(\E\star\rho)\, .
\end{aligned}
\end{equation}
Moreover, if $t>0$ and $X$ is Hermitian with eigenvalues $\lambda_i$, then
\begin{equation} \label{ENTSCL54}
S(tX)=-\sum_i t\lambda_i\log|t\lambda_i|=tS(X)-t\log t\sum_i\lambda_i=tS(X)-t\log t\,\tr(X)\, .
\end{equation}

Now choose an eigenbasis of $\rho$, write $\rho=a\dyad{0}{0}+(1-a)\dyad{1}{1}$ where $0\le a\le1$, and set $\theta=(1+p)/2$. In this basis, we have
\[
\mathcal D_p\star\rho=
\begin{pmatrix}
a\theta&0&0&0\\
0&a(1-\theta)&p/2&0\\
0&p/2&(1-a)(1-\theta)&0\\
0&0&0&(1-a)\theta
\end{pmatrix}.
\]
Set $s=\sqrt{a(1-a)}$. Since the two outer scalar blocks and the diagonal entries of the middle $2\times 2$ block of $\mathcal D_p\star\rho$ are nonnegative, positivity is equivalent to non-negativity of the determinant of the middle $2\times 2$ block. This determinant is $(s^2(1-p)^2-p^2)/4$, so $\mathcal D_p\star\rho\ge0$ precisely when $p\le p_0=s/(1+s)$. If $0\le p\le q\le p_0$ with $q>0$, then $\mathcal D_p=\mathcal D_{p/q}\circ\mathcal D_q$, and~\eqref{STCMP67} gives
\[
\mathcal D_p\star\rho=(\id_A\otimes\mathcal D_{p/q})(\mathcal D_q\star\rho)\, .
\]
Since $q\le p_0$, the preceding positivity criterion gives $\mathcal D_q\star\rho\ge0$, while the displayed matrix gives $\tr(\mathcal D_q\star\rho)=1$. Thus $\mathcal D_q\star\rho$ is a density matrix. Moreover, $\id_A\otimes\mathcal D_{p/q}$ is a unital quantum channel. As a unital quantum channel maps a density matrix to a state majorized by the input, and the von~Neumann entropy is Schur-concave, it follows that
\[
S(\mathcal D_p\star\rho)\ge S(\mathcal D_q\star\rho)\, ,
\]
thus $p\mapsto S(\mathcal D_p\star\rho)$ is non-increasing on $[0,p_0]$. For $p_0<p<1$, set
\[
\Xi=\sqrt{(1-p)^2(1-4s^2)+4p^2}\, ,\qquad u=\frac{1-p+\Xi}{4}\, ,\qquad \text{and} \qquad v=\frac{\Xi-1+p}{4}\, .
\]
Then $\sigma(\mathcal D_p\star\rho)=\{a\theta,(1-a)\theta,u,-v\}$ with $0<v\le u$, and
\[
S(\mathcal D_p\star\rho)=\theta S(\rho)-\theta\log\theta-u\log u+v\log v\, .
\]
Since $u-v=(1-p)/2$ and $uv=(p^2-s^2(1-p)^2)/4$,
\[
\frac{d}{dp}S(\mathcal D_p\star\rho)=\frac12\left(S(\rho)+\log\frac{u}{\theta}\right)+v'\log\frac{v}{u}\, ,
\]
where $v'=\frac{p+s^2(1-p)+v}{2(u+v)}\ge0$. For $0<a<1$, concavity of the logarithm gives
\[
S(\rho)=2\left(a\log a^{-1/2}+(1-a)\log(1-a)^{-1/2}\right)\le2\log\left(\sqrt a+\sqrt{1-a}\right)=\log(1+2s)\, ,
\]
with the boundary cases following by continuity. Set $t=p/(1-p)$ and $y=(1+\sqrt{1+4(t^2-s^2)})/2$. Since $p\ge p_0$ is equivalent to $t\ge s$, we have $u/\theta=y/(1+2t)$. Moreover, $s\le1/2$ implies $\sqrt{1+4(t^2-s^2)}\ge2t$, and hence $y-t\ge1/2\ge2s^2$. Therefore
\[
\frac{d}{dt}\frac{y}{1+2t}=\frac{2(t-y+2s^2)}{(1+2t)^2(2y-1)}\le0\, .
\]
Thus $t\mapsto y/(1+2t)$ is non-increasing on $[s,\infty)$, and since $y(s)=1$,
\[
\frac{u}{\theta}\le\frac{1}{1+2s}\, .
\]
It then follows that
\[
S(\rho)+\log\frac{u}{\theta}\le\log(1+2s)-\log(1+2s)=0\, ,
\]
and since $v'\ge0$ and $0<v\le u$ we have $v'\log\frac{v}{u}\leq 0$. Therefore, we have 
\[
\frac{d}{dp}S(\mathcal D_p\star\rho)\leq 0
\]
for $p_0<p<1$, hence the function $p\mapsto S(\mathcal D_p\star\rho)$ is non-increasing on $(p_0,1)$. By continuity this monotonicity extends to $[p_0,1]$. Taken together with the monotonicity already proved on $[0,p_0]$, it follows that $p\mapsto S(\mathcal D_p\star\rho)$ is non-increasing on all of $[0,1]$. Finally, we have
\[
S(\mathcal D_0\star\rho)=S(\rho)+\log2\, ,
\qquad \text{and} \qquad
S(\mathcal D_1\star\rho)=S(\rho)\, ,
\]
which establishes the upper and lower bounds, thus concluding the proof.
\end{proof}

\begin{figure}[htbp]
\centering
\includegraphics[width=0.8\textwidth]{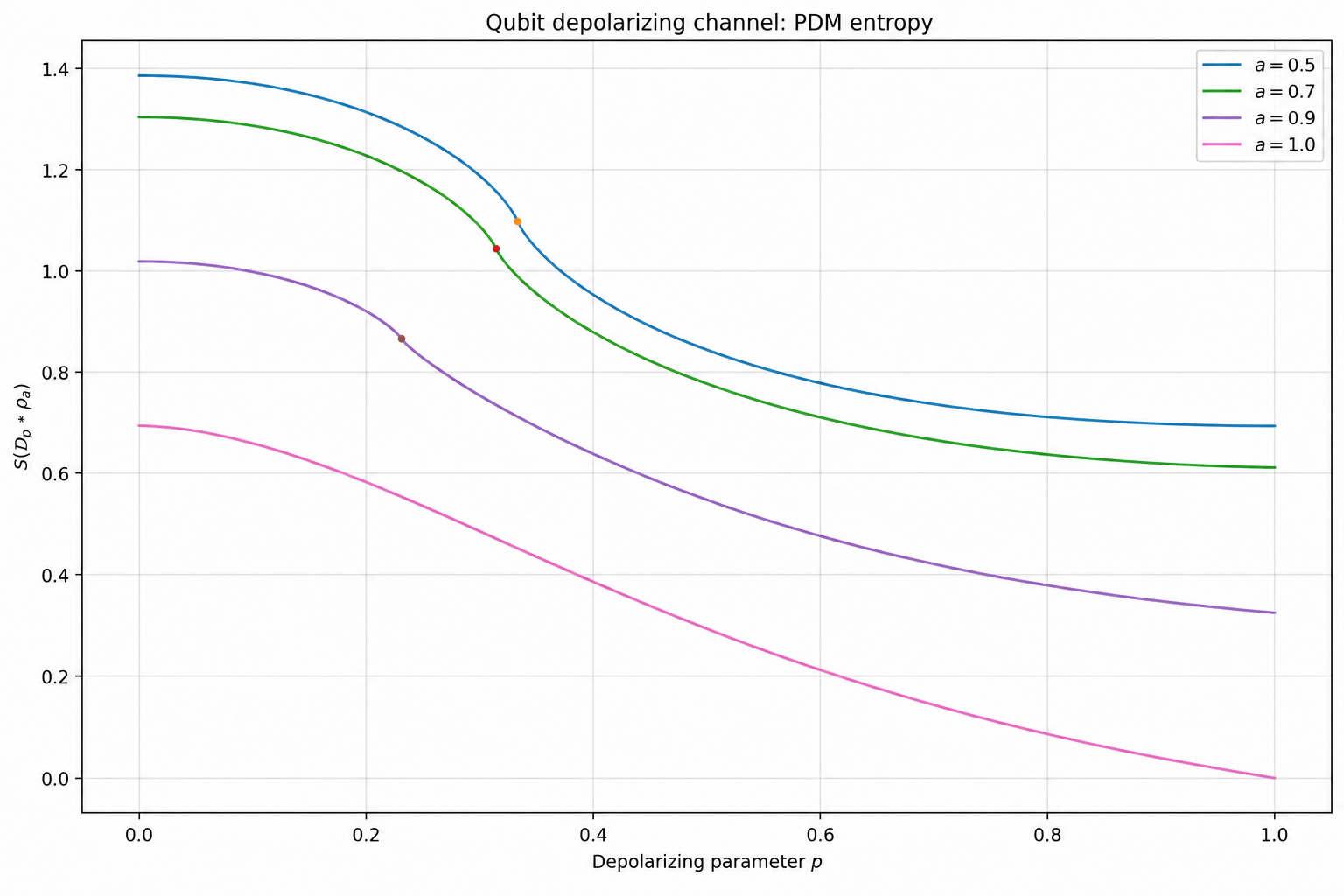}
\caption{The PDM entropy $S(\mathcal D_p\star\rho_a)$ for the qubit depolarizing channel, where $\rho_a=a\dyad{0}{0}+(1-a)\dyad{1}{1}$, for $a=0.5,0.7,0.9,1$. The entropy is nonincreasing in $p$, in agreement with the proposition above. The marked points indicate the values of $p$ at which $\mathcal D_p\star\rho_a$ acquires a zero eigenvalue.}
\label{fig:qubit-depolarizing-entropy}
\end{figure}


In Figure~\ref{fig:qubit-depolarizing-entropy} we plot the PDM entropy for several values of $a$ as a function of the depolarizing parameter $p\in [0,1]$. The following result establishes bounds on the PDM entropy for depolarizing channels on systems of arbitrary dimension.

\bn
Let $\mathcal D_p:M_d(\mathbb C)\to M_d(\mathbb C)$ be the $d$-dimensional depolarizing channel defined by $\mathcal D_p(X)=pX+(1-p)\tr(X)\mathds{1}_d/d$ for $0\le p\le1$. Then, for every density matrix $\rho\in M_d(\mathbb C)$,
\[
S(\rho)\le S(\mathcal D_p\star\rho)\le S(\rho)+\log d\, .
\]
\en

\begin{proof}
Choose an eigenbasis of $\rho$ and write $\rho=\sum_i r_i\dyad{i}{i}$. Set $b=(1-p)/d$ and $c=p+b$. Then $\mathcal D_p\star\rho$ is the direct sum of the scalar blocks $cr_i$ and, for $i<j$, the blocks 
\[
B_{ij}=
\begin{pmatrix}
br_i&\frac{p}{2}(r_i+r_j)\\
\frac{p}{2}(r_i+r_j)&br_j
\end{pmatrix}.
\]
By additivity of $S$ over direct sums,
\[
S(\mathcal D_p\star\rho)=\sum_i f(cr_i)+\sum_{i<j}S(B_{ij})\, ,
\]
where $f(x)=-x\log|x|$. For each $i<j$, Lemma~\ref{prop:two-block-signed-pinching} (\ref{LSX3}) applied with
\[
P=[br_i]\, ,\qquad Q=[br_j]\, ,\qquad \text{and} \qquad D=\left[\frac{p}{2}(r_i+r_j)\right]\, ,
\]
yields $S(B_{ij})\le f(br_i)+f(br_j)$. Therefore,
\begin{align*}
S(\mathcal D_p\star\rho)
&\le\sum_i f(cr_i)+\sum_{i<j}\big(f(br_i)+f(br_j)\big)\\
&=\sum_i f(cr_i)+(d-1)\sum_i f(br_i)
&& \text{since each $i$ occurs in exactly $d-1$ pairs}\\
&=S(\rho)-c\log c-(d-1)b\log b\\
&\le S(\rho)+\log d
&& \text{since $(c,b,\ldots,b)$ is a probability vector}\\
&\le2\log d\, ,
\end{align*}
thus establishing the upper bound.

For the lower bound, the case $d=1$ is trivial and the case $d=2$ is Proposition~\ref{DPLQBT58}. Assume $d\ge3$ and $p<1$. Set $\alpha=p+2b$. Then $0<c\le\alpha<1$, and $q=p/\alpha\in[0,1]$. Let $\mathcal D_q^{(2)}$ denote the qubit depolarizing channel with parameter $q$. For each $i<j$, set $m_{ij}=r_i+r_j$ and
\[
Y_{ij}=[cr_i]\oplus B_{ij}\oplus[cr_j].
\]
If $m_{ij}>0$, define
\[
\rho_{ij}=m_{ij}^{-1}\left(r_i\dyad{0}{0}+r_j\dyad{1}{1}\right).
\]
A direct comparison of the blocks then yields
\[
Y_{ij}=\alpha m_{ij}(\mathcal D_q^{(2)}\star\rho_{ij})\, .
\]
If $m_{ij}=0$, then $r_i=r_j=0$ and hence $Y_{ij}=0$. For $m_{ij}>0$, since $\tr(\mathcal D_q^{(2)}\star\rho_{ij})=1$, Proposition~\ref{DPLQBT58} and~\eqref{ENTSCL54} give
\begin{align*}
S(Y_{ij})
&=\alpha m_{ij}S(\mathcal D_q^{(2)}\star\rho_{ij})-\alpha m_{ij}\log(\alpha m_{ij})\\
&\ge\alpha m_{ij}S(\rho_{ij})-\alpha m_{ij}\log(\alpha m_{ij})\\
&=\alpha(f(r_i)+f(r_j))-\alpha m_{ij}\log\alpha\, .
\end{align*}
For $m_{ij}=0$, the same inequality holds trivially. On the other hand,
\[
\sum_{i<j}S(Y_{ij})=(d-1)\sum_i f(cr_i)+\sum_{i<j}S(B_{ij})=S(\mathcal D_p\star\rho)+(d-2)\sum_i f(cr_i)\, ,
\]
while summing the preceding lower bound gives
\[
\sum_{i<j}S(Y_{ij})\ge\alpha\sum_{i<j}(f(r_i)+f(r_j))-\alpha\log\alpha\sum_{i<j}m_{ij}=\alpha(d-1)(S(\rho)-\log\alpha)\, .
\]
Since $\sum_i f(cr_i)=cS(\rho)-c\log c$ and $(d-1)\alpha-(d-2)c=1$,
\[
S(\mathcal D_p\star\rho)\ge S(\rho)+(d-2)c\log c-(d-1)\alpha\log\alpha\, .
\]
Finally, since the function $\phi(x)=-x\log x/(1-x)$ satisfies
\[
\phi'(x)=\frac{x-1-\log x}{(1-x)^2}\ge0
\]
for $0<x<1$, it follows that $\phi$ is increasing on $(0,1)$. Therefore,
\begin{align*}
(d-1)(-\alpha\log\alpha)
&=(d-1)(1-\alpha)\phi(\alpha)\\
&=(d-2)(1-c)\phi(\alpha) &&\text{since $(d-1)(1-\alpha)=(d-2)(1-c)$}\\
&\ge(d-2)(1-c)\phi(c)
&& \text{since $\alpha\ge c$}\\
&=(d-2)(-c\log c)\, ,
\end{align*}
thus $S(\mathcal D_p\star\rho)\ge S(\rho)$. The case $p=1$ follows from Proposition~\ref{UNTRXY71}, since in such a case $\mathcal D_1=\id_A$ is a unitary channel.
\end{proof}

We now consider bit-flip and phase-flip channels, which are central to the study of quantum error correction.

\begin{example}[Bit-flip channels]
Consider the qubit bit-flip channel 
\[
\mathcal B_p(X)=(1-p)X+p\sigma_xX\sigma_x,\qquad 0\leq p\leq1\, ,
\]
and let $\rho_a=a\dyad{0}{0}+(1-a)\dyad{1}{1}$ for $0\le a\le1$. The PDM entropy is symmetric about $p=1/2$, i.e.,
\[
S(\mathcal B_p\star\rho_a)=S(\mathcal B_{1-p}\star\rho_a)\, ,
\]
and satisfies $0\le S(\mathcal B_p\star\rho_a)\le\log2$ for all $a,p\in[0,1]$. The upper bound is attained for $a=1/2$ for every $p$, while the lower bound is attained when $a\in\{0,1\}$ and $p\in\{0,1\}$. A direct calculation yields the spectrum
\[
\sigma(\mathcal B_p\star\rho_a)
=
\left\{
\frac{1-p+d_1}{2},
\frac{1-p-d_1}{2},
\frac{p+d_2}{2},
\frac{p-d_2}{2}
\right\},
\]
where
\[
d_1=\sqrt{(1-p)^2(2a-1)^2+p^2},
\qquad
d_2=\sqrt{p^2(2a-1)^2+(1-p)^2}.
\]
Under $p\mapsto1-p$, the quantities $d_1$ and $d_2$ are interchanged, as are the two pairs of eigenvalues, which proves the symmetry above. A direct evaluation of the PDM entropy from the eigenvalues yields the stated bounds. In Figure~\ref{BFLPFIG72} we plot the PDM entropy for several values of $a$ as a function of $p\in[0,1]$.
\end{example}

\begin{figure}[htbp] 
\centering
\includegraphics[width=0.8\textwidth]{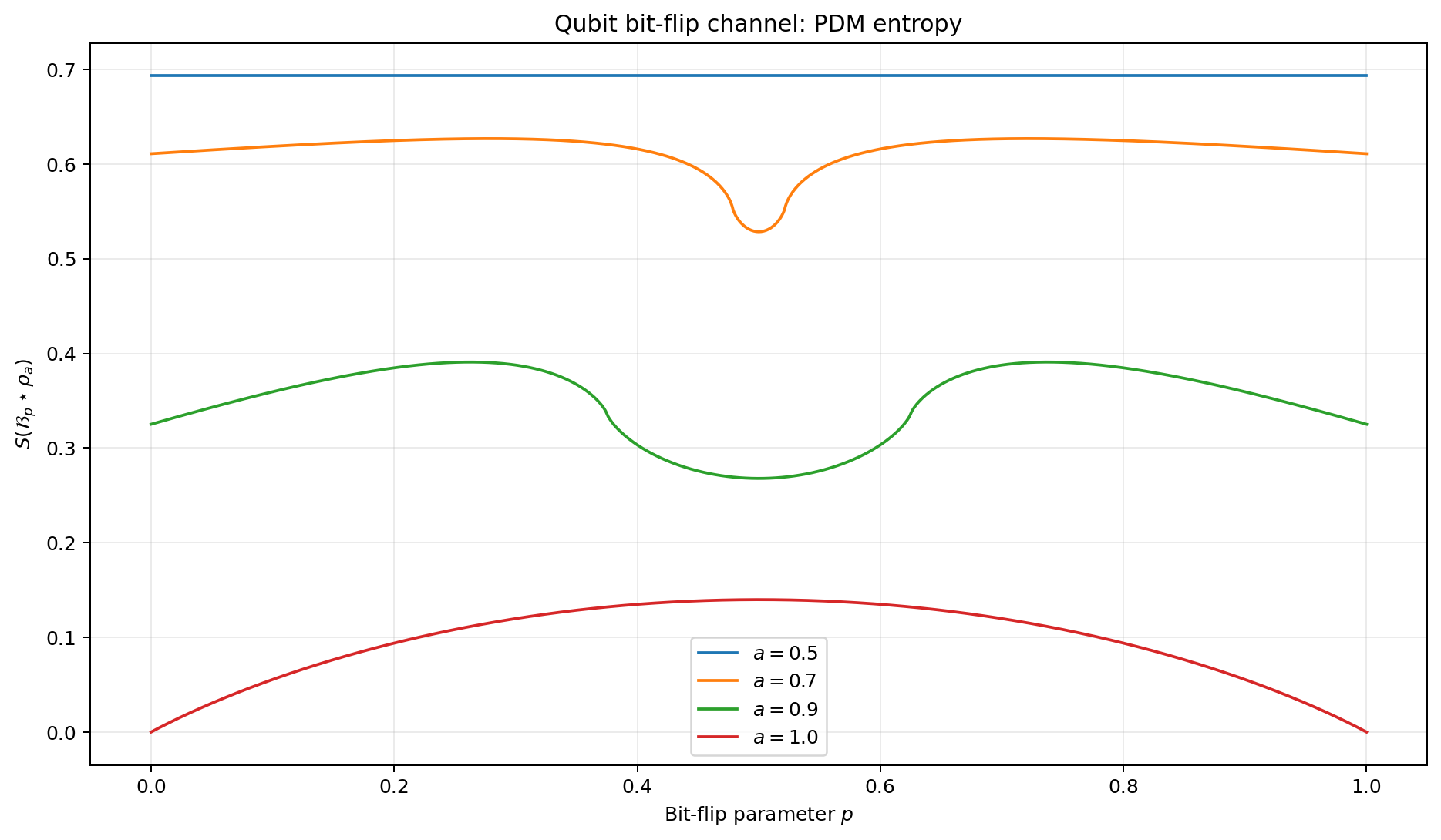}
\caption{The PDM entropy $S(\mathcal B_p\star\rho_a)$ for the qubit bit-flip channel, where $\rho_a=a\dyad{0}{0}+(1-a)\dyad{1}{1}$, for $a=0.5,0.7,0.9,1$. The entropy is symmetric about $p=1/2$ and remains between $0$ and $\log2$.}
\label{BFLPFIG72}
\end{figure}


\begin{example}[Phase-flip channels]
Consider the qubit phase-flip channel $\mathcal{P}_p:M_2(\C)\to M_2(\C)$ given by 
\[
\mathcal P_p(X)=(1-p)X+p\sigma_zX\sigma_z\qquad 0\leq p\leq 1\, , 
\]
and let $\rho_a=a\dyad{0}{0}+(1-a)\dyad{1}{1}$ for $0\le a\le1$. In this case the PDM entropy is independent of $p$ and satisfies
\[
S(\mathcal P_p\star\rho_a)=S(\rho_a)\, ,
\]
so that $0\le S(\mathcal P_p\star\rho_a)\le\log2$. Indeed, a direct calculation gives 
\[
\sigma(\mathcal P_p\star\rho_a)=\{a,1-a,(1-2p)/2,-(1-2p)/2\}\, ,
\]
thus the contributions of the last two eigenvalues to the PDM entropy cancel.
\end{example}

Finally we consider the case of amplitude-damping channels, which are typically utilized to model energy dissipation and decoherence. 

\begin{example}[Amplitude-damping channels]
Consider the qubit amplitude-damping channel $\mathcal A_\gamma(X)=K_0XK_0^\dagger+K_1XK_1^\dagger$ for $0\le\gamma\le1$, where $K_0=\dyad{0}{0}+\sqrt{1-\gamma}\dyad{1}{1}$ and $K_1=\sqrt{\gamma}\dyad{0}{1}$, and let $\rho_a=a\dyad{0}{0}+(1-a)\dyad{1}{1}$ for $0\le a\le1$. A direct calculation of the spectrum of $\mathcal A_\gamma\star\rho_a$ yields
\[
\sigma(\mathcal A_\gamma\star\rho_a)=\left\{a,\tilde{a}(1-\gamma),\frac{\tilde{a}\gamma+\sqrt{\tilde{a}^2\gamma^2+1-\gamma}}{2},\frac{\tilde{a}\gamma-\sqrt{\tilde{a}^2\gamma^2+1-\gamma}}{2}\right\}\, ,
\]
where $\tilde{a}=1-a$. Consequently, $S(\mathcal A_0\star\rho_a)=S(\mathcal A_1\star\rho_a)=S(\rho_a)$ and $S(\mathcal A_\gamma\star\mathds{1}_2/2)=\log2$ for every $\gamma\in[0,1]$. Theorem~\ref{thm:qubit-input-sot-mutual-information} also gives $S(\mathcal A_\gamma\star\rho_a)\le S(\rho_a)+S(\mathcal A_\gamma(\rho_a))\le2\log2$. In Figure~\ref{AMPDFIG83} we plot the PDM entropy for several values of $a$ as a function of the amplitude-damping parameter $\gamma\in [0,1]$.
\end{example}

\begin{figure}[htbp]
\centering
\includegraphics[width=0.8\textwidth]{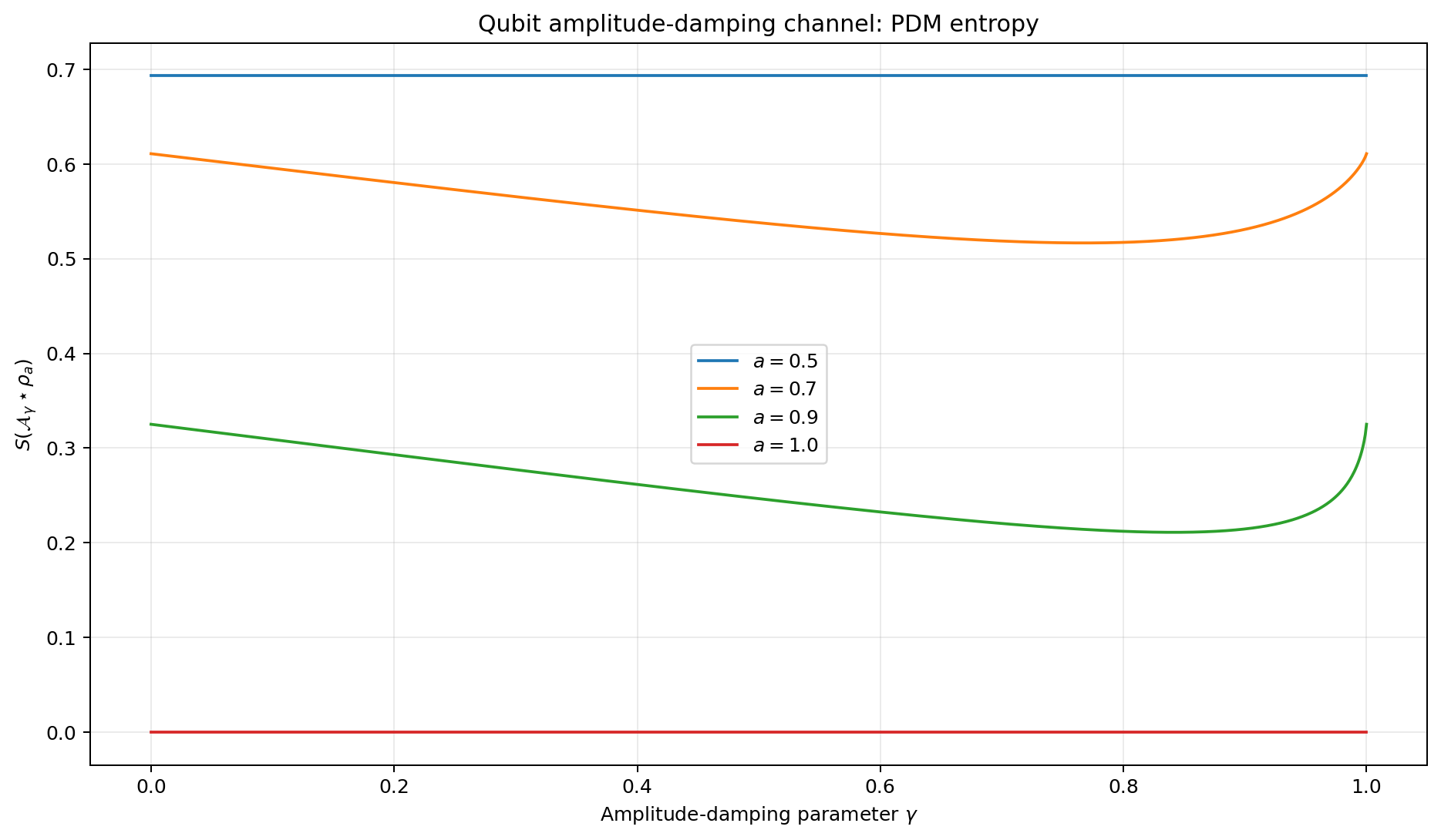}
\caption{The PDM entropy $S(\mathcal A_\gamma\star\rho_a)$ for the qubit amplitude-damping channel, where $\rho_a=a\dyad{0}{0}+(1-a)\dyad{1}{1}$, for $a=0.5,0.7,0.9,1$. The endpoint values satisfy $S(\mathcal A_0\star\rho_a)=S(\mathcal A_1\star\rho_a)=S(\rho_a)$, while $S(\mathcal A_\gamma\star\mathds{1}_2/2)=\log2$ for every $\gamma\in[0,1]$.}
\label{AMPDFIG83}
\end{figure}

\section{Concluding remarks}

In this work, we derived a unique extension of von~Neumann entropy to pseudo-density matrices from two simple assumptions: unitary invariance and strong additivity with respect to affine combinations within the interval $[-1,1]$. The relaxation of strong additivity over convex combinations to allow for affine combinations within the interval $[-1,1]$ was motivated by our proof that $[-1,1]$ always contains the eigenvalues of a pseudo-density matrix. While the entropy of a pseudo-density matrix also satisfies properties in common with von~Neumann entropy, such as additivity over tensor products and a Fannes-Audenaert type inequality, it is presently unknown if PDM entropy is subadditive for 2-time pseudo-density matrices. Although we proved subadditivity for 2-time pseudo-density matrices whose initial state is that of a single qubit, numerically generated examples suggest that subadditivity may fail in higher dimensions. However, we still do not know of a simple, physically motivated example where subadditivity fails.

As each eigenvalue $\lambda\in [-1,1]$ of a pseudo-density matrix contributes $-\lambda\log|\lambda|$ to its entropy, negative eigenvalues yield a negative contribution to the entropy, making it difficult to obtain explicit bounds on PDM entropy for general pseudo-density matrices. In particular, we do not even know if PDM entropy is non-negative in general when restricted to pseudo-density matrices. However, in the examples considered in this work (and in prior works), the PDM entropy behaves in a way that is physically intuitive, serving as an effective global measure of uncertainty which takes into account both the entropy of the initial state and the uncertainty induced by open system dynamics. In any case, an explicit, operational meaning of PDM entropy is still lacking, and remains a primary objective for future investigation. Having established a rigorous mathematical justification for its use, we naturally anticipate that a deeper understanding of PDM entropy will lead to new insights regarding spatiotemporal aspects of quantum information.

\emph{Acknowledgments.}---The authors acknowledge the use of Google Gemini to assist in the exploration of analytical methods and the preliminary structural organization of mathematical proofs. The authors take full responsibility for the logical derivation, rigorous verification, and final presentation of all mathematical results. J.F. is supported by the Hainan Provincial Natural Science Foundation of China under Grant No.~126MS0010.

\addcontentsline{toc}{section}{\numberline{}Bibliography}
\bibliographystyle{quantum}
\bibliography{references}

\Addresses
\end{document}